\title{Sublinear Dynamic Interval Scheduling (on one or multiple machines)}
\author[1]{Pawe{\l} Gawrychowski}
\author[1]{Karol Pokorski}
\affil[1]{Institute of Computer Science, University of Wroc{\l}aw, Poland
[gawry,pokorski]@cs.uni.wroc.pl
}
\date{}
\newcommand{\cO}{\mathcal{O}}
\newcommand{\Oh}{\cO}
\newcommand{\cOtilde}{\tilde{\mathcal{O}}}
\newcommand{\Ohtilde}{\cOtilde}
\newcommand{\poly}{\operatorname{poly}}
\newcommand{\LC}{\textsc{LC}}
\newcommand{\LCinternal}{\textsc{LC-int}}
\newcommand{\LCdecr}{\textsc{LC-decr}}
\newcommand{\Rinternal}{\textsc{Res-int}}
\newcommand{\Einternal}{\textsc{Exit-int}}
\newcommand{\NextG}{\textsc{Next}}
\newcommand{\DS}{\mathcal{D}}
\newcommand{\PS}{\mathcal{P}}
\newcommand{\AS}{\mathcal{A}}
\newcommand{\BS}{\mathcal{B}}
\newcommand{\TS}{\mathcal{T}}
\newcommand{\TSm}{\overline{\mathcal{T}}}
\newcommand{\TT}{\mathfrak{T}}
\newcommand{\M}{\textsc{FMR}}
\newcommand{\polylog}{\operatorname{polylog}}
  \theoremstyle{plain}
  \newtheorem{theorem}{Theorem}
  \newtheorem{lemma}[theorem]{Lemma}
  \newtheorem{corollary}[theorem]{Corollary}
  \newtheorem{proposition}[theorem]{Proposition}
  \newtheorem{definition}[theorem]{Definition}
  \newtheorem{remark}[theorem]{Remark}
  \newtheorem{conjecture}[theorem]{Conjecture}
\newcommand{\FIGURE}[4]{
\begin{figure}[#1]
\begin{centering}
\includegraphics[width={#2}\textwidth]{figures/#3.pdf}
\caption{#4}
\label{fig:#3}
\end{centering}
\end{figure}
}
\begin{document}

\maketitle

\begin{abstract}
  We revisit the complexity of the classical Interval Scheduling in the dynamic setting.
  In this problem, the goal is to maintain a set of intervals under insertions and deletions
  and report the size of the maximum size subset of pairwise disjoint intervals after each
  update. Nontrivial approximation algorithms are known for this problem, for both
  the unweighted and weighted versions [Henzinger, Neumann, Wiese, SoCG 2020].
  Surprisingly, it was not known if the general exact version admits an exact solution working in
  sublinear time, that is, without recomputing the answer after each update.

  Our first contribution is a structure for Dynamic Interval Scheduling with amortized $\Ohtilde(n^{1/3})$ update time.
  Then, building on the ideas used for the case of one machine, we design
  a sublinear solution for any constant number of machines:
  we describe a structure for Dynamic Interval Scheduling on $m\geq 2$ machines
  with amortized $\Ohtilde(n^{1 - 1/m})$ update time.

  We complement the above results by considering Dynamic Weighted Interval
  Scheduling on one machine, that is maintaining (the weight of) the maximum
  weight subset of pairwise disjoint intervals. We show an almost linear
  lower bound (conditioned on the hardness of Minimum Weight $k$-Clique) for
  the update/query time of any structure for this problem. Hence, in the weighted case
  one should indeed seek approximate solutions.
\end{abstract}

\section{Introduction}

The \textsc{Interval Scheduling} (IS) problem is often used as one of the very first examples
of problems that can be solved with a greedy approach. In this problem,
we have a set of jobs, the $i$-th job represented by an interval $(s_{i},f_{i})$.
Given $n$ such intervals, we want to find a maximum size subset
of pairwise disjoint intervals. In this context, disjoint intervals are usually called
compatible. This admits a natural interpretation as a scheduling problem,
where each request corresponds to a job that  cannot be interrupted and require exclusive access to
a machine. Then, the goal is to schedule as many jobs as possible using a single
machine. The folklore greedy algorithm solves this problem in $\Oh(n)$ time,
assuming that the intervals are sorted by the values of $f_{i}$~\cite{kleinbergtardos}.
While it may appear to be just a puzzle, interval scheduling admits multiple applications
in areas such as logistics, telecommunication, manufacturing, or personnel scheduling.
For more applications and a detailed summary of different variants of interval
scheduling, we refer to~\cite{kolen}.

In many real-world applications, there is a need for maintaining the input under
certain updates (for example, insertions and deletions of items), so that we can
report the optimal solution (or its cost) after each operation. The goal is to
avoid the possibly very expensive recalculation of the answer (which surely
takes at least linear time in the size of the input) by maintaining some kind of
additional structure. The first step in this line of research is to design a structure
with sublinear update/query time. Then, the next goal is to bring down the
time complexities to polylogarithmic (in the size of the input). 
Examples of problems in which this has been successfully accomplished
include dynamic graph connectivity \cite{demainepatrascu,henzinger2,holm},
dynamic longest increasing subsequence \cite{kociumakaseddighin,gawrychowskijanczewski},
dynamic suffix array \cite{amir,kempakociumaka},
dynamic graph clustering \cite{doll}, and many others.
For some dynamic problems no such solutions are known,
and we have tools for proving (conditional) polynomial hardness
for dynamic algorithms~\cite{HenzingerKNS15}.

This suggests the following \textsc{Dynamic Interval Scheduling} (DIS) problem,
in which we want to maintain a set $S$ of intervals subject to insert and delete operations.
After each update, we should report the size of the maximum size subset of pairwise compatible
intervals. Note that reporting the subset itself might be not feasible, as it
might contain $\Omega(n)$ intervals. Similarly, neither is explicitly maintaining
this subset, as an update might trigger even $\Omega(n)$ changes in the unique
optimal subset. Thus, the challenge is to maintain an implicit representation
of the current solution that avoids recomputing the answer after each update, that is,
supports each update in sublinear time.
Besides being a natural extension of a very classical problem, we see this
question as possibly relevant in practical application in which we need to
cope with a dynamically changing set of jobs.

\subsection{Previous work}

Surprisingly, to the best of our knowledge, the complexity of general exact DIS
was not considered in the literature. However, Gavruskin et al.~\cite{gavruskin}
considered its restricted version, in which there is an extra constraint on the
set $S$. Namely, it should be \emph{monotonic} at all times: for any two intervals
$(s_i, f_i), (s_j, f_j) \in S$ we should have $s_{i}<s_{j}$ and $f_{i}<f_{j}$ or vice versa.
Under such assumption, there is a structure with $\Oh(\log^{2} n)$ amortized time per
update and $\Oh(\log n)$ amortized time per query. Alternatively,
the update time can be decreased to $\Oh(\log n)$ if the query only returns
if a given interval belongs to the optimal solution. 

For the general version of DIS, Henzinger, Neumann and Wiese~\cite{henzinger}
designed an efficient approximation algorithm that maintains an $(1+\epsilon)$-approximate
solution in polylogarithmic time. The dependency on $\epsilon$ has been very recently
improved from exponential to polynomial by Compton, Mitrović and Rubinfeld~\cite{Compton}.
In fact, both solutions work for the weighted version of the problem, called
\textsc{Dynamic Weighted Interval Scheduling} (DWIS). In this problem, each
interval has its associated weight, and the goal is to maintain a subset of pairwise
compatible intervals with the largest total weight. Note that the static
version of this problem, called \textsc{Weighted Interval Scheduling} (WIS),
can be solved by a straightforward dynamic programming algorithm~\cite{kleinbergtardos}
(but the greedy strategy no longer works now that we have weights).
This brings the challenge of determining if the unweighted (and weighted)
version of the problem admits an efficient exact solution.

A natural generalization of interval scheduling is to consider multiple
machines. In such a problem, there is a shared set of jobs to process, each job
can be either discarded or scheduled on one of the available $m$ machines. Jobs
scheduled on each machine must be pairwise compatible.
The goal is to maximize the number (or the total weight) of scheduled intervals.
IS on multiple machines (IS+) can be solved by extending the greedy algorithm
considering intervals by the earliest end time.
For each considered interval, if no machine is free at the respective time,
the interval is discarded. If there are some free machines, the interval is
assigned to the available machine that was busy at the latest. A direct
implementation of this approach incurs a factor of $m$ in the running time,
but this can be avoided~\cite{faigle,carlisle}.
The weighted version of the problem (WIS+) can be formulated and solved
as a min-cost flow problem~\cite{arkin,bouzina}.
For the dynamic version, Compton, Mitrović and Rubinfeld~\cite{Compton}
extend their methods for maintaining an approximate answer to multiple machines, however,
their bounds are mostly relevant for the unweighted case.
A related (but not directly connected) question is to maintain the smallest
number of machines necessary to schedule all jobs in the current set~\cite{gavruskin}.

\subsection{Our contribution}

In this paper, we consider dynamic interval scheduling on one and multiple machines.
We show that the unweighted version of the problem admits a sublinear dynamic
solution, and furthermore, we make non-trivial progress on decreasing the exponent
in the time complexity of the solution.

The starting point is a simple structure for the general DIS problem with
$\Oh(\sqrt{n} \log n)$ amortized update/query time. This is then improved
to $\Ohtilde(n^{1/3})$ amortized update/query time. For multiple machines, we
begin with $m=2$, and show how to solve the corresponding problem, denoted DIS2,
in $\Ohtilde(\sqrt{n})$ amortized time per update. Next, we use this solution
to solve the general DIS+ problem in $\Ohtilde(n^{1 - 1/m})$ amortized time per update.
While designing a solution working in $\Ohtilde(n^{1-1/(m+1)})$ time is not very difficult,
our improved time bounds require some structural insight that might be of independent
interest.

\begin{theorem}
  There is a date structure for Dynamic Interval Scheduling on $m \ge 1$
  machines that supports any update in  $\Ohtilde(\max(n^{1/3}, n^{1 - 1/m}))$ amortized time.
  \label{theorem:combined}
\end{theorem}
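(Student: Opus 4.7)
The plan is to prove this theorem as a consolidation of the two separate regimes that the introduction advertises having been developed earlier: the specialised single-machine construction yielding amortised time $\Ohtilde(n^{1/3})$, and the multi-machine construction yielding amortised time $\Ohtilde(n^{1-1/m})$. So the natural strategy is to dispatch on $m$ and invoke the appropriate structure, then check that the stated bound $\Ohtilde(\max(n^{1/3},n^{1-1/m}))$ correctly dominates both cases.

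More concretely, I would proceed as follows. First, for $m=1$, the exponent $1-1/m$ collapses to $0$ so $n^{1-1/m}=1$; here I instantiate the $\Ohtilde(n^{1/3})$-update DIS structure, and the max expression in the theorem statement indeed reduces to $n^{1/3}$. Second, for any $m\geq 2$, I instantiate the DIS+ structure with amortised update time $\Ohtilde(n^{1-1/m})$; since $1-1/m\geq 1/2>1/3$ for $m\geq 2$, we have $n^{1-1/m}\geq n^{1/3}$ for all $n\geq 1$, and again the max expression evaluates to $n^{1-1/m}$. This covers both cases without overlap or gap, so the combined bound holds uniformly.

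The main obstacle is of course not this packaging step, which amounts to a sanity check on the exponents, but the underlying components invoked: designing the $\Ohtilde(n^{1/3})$ structure for one machine and the $\Ohtilde(n^{1-1/m})$ structure for $m$ machines. The former requires going beyond the simple $\Oh(\sqrt{n}\log n)$ baseline mentioned in the contributions, and the latter depends on a nontrivial extension of the DIS2 algorithm rather than on the easier $\Ohtilde(n^{1-1/(m+1)})$ bound that the authors explicitly flag as being simpler to obtain. Assuming those constructions have been established in the preceding sections, the current theorem is then a one-paragraph corollary that selects the correct structure according to $m$.
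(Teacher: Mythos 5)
Your dispatch-on-$m$ argument is precisely how the paper assembles \cref{theorem:combined}: it is a summary statement, with the $m=1$ case provided by \cref{theorem:n13}, the $m=2$ case by \cref{theorem:n12_two}, and the $m\ge 3$ case by the subsequent construction, and your exponent check ($1-1/m\ge 1/2>1/3$ for $m\ge 2$, collapsing to $n^{1/3}$ at $m=1$) matches the intended reading of the $\max$. This is the same approach the paper takes, and it is correct.
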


We complement the above result by a (conditional) lower bound for the weighted version
of the problem, even with $m=1$.  We show that, for every $\epsilon > 0$, under the  Minimum
Weight $(2\ell + 1)$-Clique Hypothesis, it is not possible to maintain
a structure that solves DWIS in $\Oh(n^{1-\epsilon})$ time per operation.
This shows an interesting difference between the static and dynamic complexities
of the unweighted and weighted versions: despite both IS and WIS admitting simple
efficient algorithms, DIS admits a sublinear solution while DWIS (probably) does not.

\subsection{Techniques and ideas}

A natural approach to DIS is to efficiently simulate the execution
of the greedy algorithm.

\begin{definition}
  For an interval $I_i = (s_i, f_i)$, the \emph{leftmost compatible} interval
  $\LC(I_i)$ is the interval $(s_{i'}, f_{i'}) \in S$ with the smallest
  $f_{i'}$ such that $s_{i'} \ge f_i$ and $\LC(I_i) = \perp$ if there is
  no such interval.
\end{definition}
Note that if the greedy algorithm includes $I_i$ in the solution then it also
includes $\LC(I_i)$. Thus, it is easy to prove that if $I_i$ is the
interval with the smallest $f_i$ in $S$, then
the (optimal) solution generated by the greedy algorithm is
$\{I_i, \LC(I_i), \LC^2(I_i), \dots\}$.

One can consider a forest in which each interval is represented as a node and
an interval $I_i$ has parent $\LC(I_i)$.
By creating an artificial root and connecting all forest roots' to it, we
make this representation a tree. We call it \emph{the greedy tree} (of $S$).
The answer to the DIS query is the length of the longest path from any node to
the root in the tree. We know this is actually the path from the earliest ending
interval thanks to the greedy algorithm.

\FIGURE{h}{0.63}{greedy_tree}{
  An input instance for DIS with the optimal solution generated by the greedy
  algorithm marked using bold lines and the corresponding greedy tree.
}

A standard approach used in dynamic problems is splitting the current input
into several smaller pieces and recomputing some information only in the piece
containing the updated item. Then, the answer is obtained by using
the information precomputed for every piece. An attempt to use
such an approach for DIS could be as follows. We partition $S$ into parts,
either by the start or the end times, and in every part we precompute the result
of running the greedy algorithm from every possible state. The goal is to accelerate
running the algorithm by being able to jump over the parts.
For $m=1$, we can simply maintain the greedy tree, as it allows us to simulate
running the greedy algorithm not only from the interval with the smallest end time
but in fact from an arbitrary interval $I_{i}$. We call this
\emph{resuming the greedy algorithm from $I_i$}. This allows us
to jump over the whole part efficiently, and by appropriately balancing the size
of each part we obtain a data structure with $\Ohtilde(n^{1/2})$ time per update.
This is described in detail in \cref{section:n12_solution}.
A similar approach works for $m>1$, except that instead of the greedy tree
we need to preprocess the answer for every $m$-tuple of intervals,
resulting in $\Ohtilde(n^{1-1/(m+1)})$ time per update.

We improve on this basic idea for both $m=1$ and $m>1$. For $m=1$,
we design a way to solve the decremental variant of DIS in only (amortized)
polylogarithmic time per update, and couple this with maintaining a buffer
of the most recent insertions. 
For $m=2$, the greedy tree is no longer sufficient to capture all
possible states of the greedy algorithm. However, by a careful inspection,
we prove that for a piece consisting of $n$ intervals, instead of precomputing
the answers for all $\Theta(n^{2})$ possible states, it is enough to
consider only $\Oh(n)$ carefully selected states.
For $m>2$, we further extend this insight by identifying only $\Oh(n^{1-1/m})$
states, called \emph{compressible}. Interestingly, using these states
to simulate the greedy algorithm starting from an arbitrary state requires
a separate $\Oh(n^{2})$ precomputation, hence we need to consider the case $m=2$ separately.

\section{Interval scheduling on one machine}
\label{section:one_machine}

For our structures, it is sufficient that $s_i$ and $f_i$ characterizing
intervals are pairwise comparable but to simplify the presentation,
we assume that $s_i, f_i \in \mathbb{R_+}$.
One can also use an order maintenance structure \cite{dietz,bender} to
achieve worst-case constant time comparisons between endpoints even
if we only assume that when inserting an interval $(s_i, f_i)$ we know just the
endpoints of existing intervals in $S$ that are the nearest predecessors of
$s_i$ and $f_i$.
We make endpoints of all intervals pairwise distinct with the standard
perturbation.
We assume that each insert operation returns a handle to the interval
which can later be used to delete.

Our structures work in epochs. At the beginning of each epoch, we set $N$ to be
the number of intervals in $S$. When the number of intervals is outside range
$[\frac{N}{2}, 2N]$, the new epoch begins.
At the beginning of an epoch, we construct an additional data structure $\DS$ of
all intervals in $S$ by a sequence of inserts in any order.
These reconstructions have no impact on the amortized update time complexity
as $n$ actual operations are turned into $\Oh(n)$ insertions and deletions.
We maintain $\DS$ during the epoch.

We maintain a global successor structure storing all intervals sorted by their
end time that enables efficient computation of $\LC(\cdot)$.
There are $k$ separators that split the universe of coordinates into parts of
similar size.
Intervals are assigned into parts $\PS_0, \PS_1, \ldots, \PS_k$ by their start
time.
Some intervals are \emph{internal} (if they fully fit in the part)
and other are \emph{external} (otherwise).
Both $\Ohtilde(n^{1/2})$ and $\Ohtilde(n^{1/3})$ structures are able to
efficiently find the internal result for an interval $I_i$ in a part, that is how
many intervals the greedy algorithm can choose from $I_i$ until reaching the
exit (the last selected) interval of the part, so DIS query is solved by
iterating over these parts and applying the exit of one part as an input to the
next one.

We recommend reading \cref{section:n12_solution}, where we introduce the above
idea by showing a simpler but slower algorithm.
Here we extend this approach and present a data structure showing the
following.
\begin{theorem}
  There is a data structure for DIS that supports any sequence of $n$
  insert/delete/query on intervals in $\Ohtilde(n^{1/3})$ amortized time
  per each operation.
  \label{theorem:n13}
\end{theorem}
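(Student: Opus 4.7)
The plan is to extend the partitioning idea from \cref{section:n12_solution} along two complementary axes: (i) replace the eager per-part rebuilds of the $\Ohtilde(n^{1/2})$ solution by a decremental per-part substructure of polylogarithmic amortized cost, and (ii) absorb insertions into a buffer that is flushed by partial rebuilds rather than global ones.

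Working within an epoch so that the current size $N$ stays $\Theta(n)$, I first partition $S$ into $k = \Theta(n^{1/3})$ parts $\PS_0, \ldots, \PS_{k-1}$ of size $\Theta(n/k)$ by start time, exactly as in the $\Ohtilde(n^{1/2})$ solution but with a different choice of $k$. A DIS query is answered by iterating the parts left to right and threading the current exit interval into each part's internal-result routine, so the whole query costs $\Ohtilde(k)$ times the per-part query cost.

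For each part I would then design a substructure that supports deletions in polylogarithmic amortized time and internal-result queries in polylogarithmic time. The natural object to maintain is the greedy forest of the part (the analogue inside a part of the greedy tree from the introduction) together with a heavy-path / link-cut tree decomposition of its $\LC$-chains: an internal-result query reduces to a logarithmic number of heavy-path navigations, while a deletion perturbs only the $O(\log n)$ heavy paths passing through the deleted interval and can therefore be handled in amortized polylogarithmic time.

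To handle insertions I would keep a buffer $\BS$. Each inserted interval is stored in $\BS$ and tagged to the part to which it belongs; once a part's local view of $\BS$ grows to $\Theta(n^{1/3})$ intervals the part is rebuilt from scratch, which costs $\Ohtilde(n^{2/3})$ and amortizes to $\Ohtilde(n^{1/3})$ per insertion. A deletion first checks $\BS$ and otherwise goes to the appropriate part's decremental substructure, both in polylog time. During a query the internal-result routine of each part takes its local buffer into account before returning the exit.

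The main obstacle is the combined internal-result call in a part that uses its decremental substructure and up to $\Theta(n^{1/3})$ locally buffered intervals simultaneously: a naive integration pays linear time in the local buffer size and would blow the query up to $\Ohtilde(n^{2/3})$. I plan to overcome this by representing the per-part buffer in a small greedy-forest auxiliary of its own and interleaving it with the substructure via mutual $\LC$-successor queries, so that each diversion through a buffered interval is resolved in polylog and the number of diversions per call stays polylogarithmic thanks to a structural observation on the $\LC$-chains inside the part. Establishing this interleaving and bounding the number of diversions is, in my view, the step where most of the technical care will go; the rest of the argument then composes the pieces via the epoch machinery already introduced at the start of \cref{section:one_machine}.
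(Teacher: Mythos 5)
Your high-level architecture matches the paper's: $\Theta(n^{1/3})$ parts of size $\Theta(n^{2/3})$ by start time, a delete-only substructure per part maintaining a dynamic tree over the greedy forest, a per-part insertion buffer of size $\Theta(n^{1/3})$ triggering a part rebuild on overflow, all wrapped in epochs. But the step you flag yourself as the crux --- the combined internal-result query that mixes the decremental structure and the buffer --- is precisely where the proposal has a genuine gap. You claim the number of diversions from the decremental chain to a buffered interval per part query is polylogarithmic ``thanks to a structural observation,'' but no such observation is stated and none is apparent: the greedy path inside a part can alternate between decremental and buffer intervals $\Theta(n^{1/3})$ times, while the per-part query must cost $\Oh(\polylog n)$ because the query iterates over $\Theta(n^{1/3})$ parts. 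The paper resolves this differently. It does \emph{not} bound the number of diversions; instead, after every update in a part it precomputes, for each of the $\Oh(n^{1/3})$ buffer intervals, the full internal result and exit of the part (by a DP over the buffer in $\Ohtilde(n^{1/3})$ time, charged to the update). A part query then only needs to locate the \emph{first} diversion point and read the rest from the buffer interval's stored answer. Locating that first diversion in $\Oh(\polylog n)$ time relies on a real structural lemma (the paper's ``switching ranges'' proposition): for each buffer interval, the decremental-structure intervals that directly want to switch to it form a contiguous range within one layer (or a suffix/prefix of two consecutive layers) of the greedy tree; these ranges are stored as three-sided rectangles in a 2D range search tree, and a binary search over depth finds the earliest switch on the path to the root. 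Without an argument of this kind, your plan leaves a $\Ohtilde(n^{2/3})$ bottleneck in the query.

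A smaller gap: you assert that a deletion perturbs only $\Oh(\log n)$ heavy paths in the per-part dynamic tree. Deleting an active interval can reconnect its entire range of children to several newly activated parents, and those parents are themselves new nodes; this is not localized to $\Oh(\log n)$ paths. The paper handles it by maintaining the greedy tree only over \emph{active} intervals, charging each re-parenting to an activation (each interval activates at most once over the lifetime of the decremental structure), and using a binarized top-tree representation so that a contiguous range of children can be re-parented in $\Oh(\polylog n)$ time. Your link-cut-tree sketch would need an analogous amortization argument to be sound.
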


The separators are chosen such that each part has size at most $2N^{2/3}$ and
for any two consecutive parts $\PS_j$ and $\PS_{j+1}$ at least one has size at
least $\frac{1}{2}N^{2/3}$. Thus, there are always $\Oh(n^{1/3})$ parts.
More details on how to maintain this partition are provided in
\cref{section:n12_solution}.

Since our goal is to achieve $\Ohtilde(n^{1/3})$ update time and parts are
larger, we cannot afford to recompute the whole part from scratch for
every update in it (as we did in \cref{section:n12_solution}).
Instead, we keep internal intervals of a part in two structures:
a decremental structure and a buffer.
External intervals are only kept in the global balanced binary search tree
containing all the intervals.
We first sketch the idea and describe the details in the following subsections.

The decremental structure of each part contains $\Oh(n^{2/3})$ intervals,
has no information about buffer intervals, can be built in $\Ohtilde(n^{2/3})$
time and allows deletions in $\Oh(\polylog n)$ time.
The buffer $\BS_j \subseteq \PS_j$ contains only at most $N^{1/3}$ last inserted
internal intervals in $\PS_j$.
Each operation in a part leads to the recomputation of information associated
with the buffer in $\Ohtilde(n^{1/3})$ time.
When $\BS_j$ overflows, we rebuild the decremental structure from scratch
using all internal intervals from the part and clear the buffer.
Such recomputation happens every $\Omega(n^{1/3})$ updates
inside a part. This way the update time of our solution can still be within the
claimed bound.

As the optimal solution may use intervals both from the decremental collection
and the buffer interchangeably, we need to combine information stored for
these sets.
For buffer intervals, we can afford to precompute the whole internal result and
the exit of the part being fully aware of the content of the decremental
collection.
However, we also need to ``notify'' intervals of the decremental collection
about potential better solutions that can be obtained by switching to buffer
intervals.
For this we store an additional structure of total size of $\Ohtilde(n^{1/3})$,
recomputed every update in a part, specifying for which intervals of the
decremental collection there exists an ``interesting'' buffer interval.

\subsection{Active and inactive intervals}

\label{subsection:active_intervals}

\begin{definition}
An interval $I_i = (s_i, f_i)$ in a collection $C$ of intervals is
\emph{active} if there is no other $(s_{i'}, f_{i'}) \in C$ such that
$s_i \le s_{i'} \le f_{i'} \le f_i$. Otherwise $I_i$ is \emph{inactive}.
\label{def:active_inactive}
\end{definition}

\begin{lemma}
For any set $S$ of intervals and an interval $I_i \in S$, the greedy algorithm
for IS resumed from $I_i$ chooses (after $I_i$) only active intervals from $S$.
\label{lemma:only_active}
\end{lemma}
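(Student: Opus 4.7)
The plan is to reduce the lemma to a single structural fact about $\LC$: for every $I_j \in S$ for which $\LC(I_j) \ne \perp$, the interval $\LC(I_j)$ is active. Once this is established, the lemma is immediate, because by definition of ``resuming the greedy algorithm from $I_i$'' the sequence of intervals chosen after $I_i$ is exactly $\LC(I_i), \LC^2(I_i), \LC^3(I_i), \ldots$, truncated when $\LC^t(I_i) = \perp$. Thus every interval chosen after $I_i$ is of the form $\LC(I')$ for some $I' \in S$, and so is active by the claimed fact.

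To prove the auxiliary fact, I would argue by contradiction. Fix $I_j \in S$ and write $(s_{i'}, f_{i'}) = \LC(I_j)$. Suppose $(s_{i'}, f_{i'})$ is inactive, so there exists a different interval $(s_k, f_k) \in S$ with $s_{i'} \le s_k \le f_k \le f_{i'}$. I will verify that $(s_k, f_k)$ satisfies the defining conditions of $\LC(I_j)$ strictly better than $(s_{i'}, f_{i'})$, which contradicts the minimality in the definition. For the start-time condition, $s_k \ge s_{i'} \ge f_j$, so $(s_k, f_k)$ is a candidate for $\LC(I_j)$. For the end-time comparison, the assumption that all endpoints are pairwise distinct (by the standard perturbation recalled at the beginning of \cref{section:one_machine}) upgrades $f_k \le f_{i'}$ to the strict inequality $f_k < f_{i'}$, since $k \ne i'$. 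Hence $(s_k, f_k)$ would have been picked by $\LC(I_j)$ instead of $(s_{i'}, f_{i'})$, a contradiction.

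The only subtle point, and the one I expect to need the most care, is precisely this passage from $\le$ (used in \cref{def:active_inactive}) to $<$ (used implicitly in the uniqueness of $\LC$). Without the distinct-endpoints convention one would need to argue via a tie-breaking rule consistent with the one used in the successor structure computing $\LC$; with the perturbation in place this is completely routine. Everything else is just unwinding the definitions of $\LC$ and of the resumed greedy process.
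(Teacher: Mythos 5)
Your proof is correct, and it takes a genuinely different (and arguably cleaner) route than the paper's. The paper argues directly about the \emph{execution} of the greedy algorithm: it takes an inactive interval $I_2$ containing some $I_1$, observes that $I_1$ is considered first (smaller end time), and splits into the case where $I_1$ is scheduled (then $I_2$ conflicts with it) and the case where $I_1$ is rejected (then whatever blocked $I_1$, together with the monotone motion of the last-scheduled end time, also blocks $I_2$). This is a process-level argument and its second case leans on an implicit monotonicity property of the greedy. You instead extract a purely definitional invariant: $\LC(I_j)$, whenever defined, is active, because any interval nested inside a candidate would be a strictly better candidate under the minimality of $\LC$ once the distinct-endpoints perturbation upgrades $\le$ to $<$. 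Combined with the paper's own observation that resuming from $I_i$ produces exactly $\LC(I_i), \LC^2(I_i), \dots$, the lemma follows with no case analysis and no reasoning about the algorithm's state over time. What the paper's approach buys is that it works directly from the greedy's behavior and doesn't require the perturbation; what yours buys is a reusable structural fact about $\LC$ itself and a shorter, one-step contradiction. You correctly flag the only subtlety, the $\le$-to-$<$ upgrade, and resolve it by the perturbation convention stated at the start of \cref{section:one_machine}, so there is no gap.
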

\begin{proof}
Assume there are two intervals $I_1 = (s_1, f_1)$ and $I_2 = (s_2, f_2)$
such that $s_2 < s_1 < f_1 < f_2$.
$I_1$ is considered earlier by the greedy algorithm.
If it is scheduled, $I_2$ can no longer be scheduled as $I_1$ and $I_2$ are
overlapping.
If it is not, $I_2$ also can not be scheduled as the set of compatible
intervals with $I_2$ is the subset of the compatible intervals with $I_1$.
\end{proof}

A collection of only active intervals is monotonic by definition.
This provides a well defined, natural order on the active intervals in the
collection:
$(s_i, f_i) \prec (s_{i'}, f_{i'}) \Leftrightarrow s_i < s_{i'} \Leftrightarrow f_i < f_{i'}$.
Because of this additional structure, we focus on describing how to maintain
the subset of active intervals inside a collection and only look for the
solution of (D)IS in this subset.

The decremental structure $\DS_j$ in each part only allows rebuilding and
deletions.
We maintain set of active intervals $\AS_j \subseteq \DS_j$ in the decremental
collection. When an interval from $\AS_j$ is deleted, the set should report
new active intervals.
We stress that the decremental structure is not aware of any buffer intervals of
$\PS_j$ and in order to determine if a particular interval is active in the
decremental collection we do not take into account any buffer intervals.

\begin{lemma}
  There is a structure that allows maintaining the subset of active intervals in
  a delete-only or insert-only collection of size $n$ in $\Oh(\log n)$ amortized
  time per insertion/deletion and can be built in $\Oh(n \log n)$ time.
  \label{lemma:decremental_active}
\end{lemma}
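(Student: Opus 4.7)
The plan is to exploit the characterization that an interval $I = (s, f)$ is active iff no other interval has $s' \ge s$ and $f' \le f$, which (sorting intervals by $s$) is equivalent to $f$ being a strict suffix minimum in the sequence of $f$-values. Thus the active set is exactly the set of strict suffix minima of $f$ when intervals are sorted by $s$, and they form a monotonic sequence as noted in the text. The initial active set can then be computed in $\Oh(n \log n)$ by sorting once and scanning from right to left, maintaining a running minimum.

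I would maintain two balanced BSTs: a ``main'' BST containing all intervals ordered by $s$, where each node stores the minimum $f$ over its subtree (so range-min queries and ``find rightmost position in a given $s$-range with $f$ below a threshold'' both take $\Oh(\log n)$), and a second BST containing only active intervals, ordered by $s$ (equivalently by $f$). For the insert-only case, on insertion of $I = (s, f)$: a subtree-min query tells us in $\Oh(\log n)$ whether $\min\{f_J : s_J > s\} > f$, i.e.\ whether $I$ is active. If so, walk leftwards through the active BST starting from $I$'s predecessor, removing each interval $J$ with $f_J > f$ (these lose activity precisely because $I$ now sits to their right with smaller $f$), and then insert $I$. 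The key amortization observation is that in insert-only mode an interval never regains activity once lost, so each interval is removed from the active BST at most once, making the total work for deactivations $\Oh(n \log n)$, hence $\Oh(\log n)$ amortized per insertion.

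For the delete-only case, deleting an inactive interval $I$ cannot change the active set (its $f$ is not a suffix minimum for any position, so removing it does not lower any suffix minimum). If $I$ was active, let $I_{\mathit{prev}}$ and $I_{\mathit{next}}$ be its neighbors in the active BST; then the newly active intervals lie strictly in the $s$-range $(s_{\mathit{prev}}, s_I)$, and for all positions past $s_I$ the minimum $f$ is now $f_{\mathit{next}}$. I enumerate the new activations by repeatedly querying, on the main BST, the rightmost position in the current $s$-range whose $f$ is below the current threshold, starting with threshold $f_{\mathit{next}}$; after each hit $J$, update the range to $(s_{\mathit{prev}}, s_J)$ and the threshold to $f_J$. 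Each query costs $\Oh(\log n)$, and because in delete-only mode an interval becomes active at most once, the total work across all deletions is $\Oh(n \log n)$, yielding the claimed $\Oh(\log n)$ amortized bound.

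The main correctness obstacle will be justifying this walk for the delete-only case: I must show that each position $J$ found is genuinely active in the post-deletion set (positions strictly between $J$ and $I$ in the current range all have $f \ge$ the current threshold by the ``rightmost'' choice, and all positions past $s_I$ have $f \ge f_{\mathit{next}}$ because $I_{\mathit{next}}$ was active before the deletion), and conversely that every newly active interval is reached since the sequence of thresholds strictly decreases and matches the new staircase of suffix minima in $(s_{\mathit{prev}}, s_I)$. Once this is established, the remaining work is standard bookkeeping on the two BSTs.
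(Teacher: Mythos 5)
Your proposal is correct and matches the paper's approach up to phrasing: you describe the active set as the staircase of strict suffix minima of $f$ (when sorted by $s$), while the paper maps each interval to the point $(s_i, -f_i)$ and maintains the Pareto front of non-dominated points, which is exactly the same object. Your two data structures (a BST over the active set keyed by $s$, and a main BST over all intervals with subtree $f$-min) correspond one-to-one to the paper's predecessor/successor structure over the front and its range search tree indexed by $x$ with subtree maximum $y$. The insertion case (check activity via a range-min, then peel off a contiguous prefix of now-dominated left neighbours, charging each removal once) and the deletion case (find the new staircase between $I_{\mathit{prev}}$ and $I_{\mathit{next}}$ by repeated ``rightmost below threshold'' queries, charging each activation once) are the same amortization the paper uses; you just spell out the walk more explicitly than the paper does.
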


\begin{proof}
Each interval $(s_i, f_i)$ is translated into a point $(s_i, -f_i)$ in a plane.
We say that point $(x, y)$ \emph{dominates} point $(x', y')$ if $x > x' \wedge y > y'$.
Point $(x', y')$ is then \emph{dominated by} $(x, y)$. We say that
a point is \emph{dominated} if there is a point that dominates it.
The interval is active in the collection if and only if the point representing
it is not dominated.
The set of non-dominated points forms a linear order: the larger
$x$-coordinate implies the smaller $y$-coordinate. We store the front of
non-dominated points in a predecessor/successor structure.
Additionally, we maintain a range search tree indexed by $x$ storing in each
node the points of the appropriate range of $x$-coordinates and what is the
point with the maximum $y$ among them.

We start by describing the insert-only structure.
When a point $(x, y)$ is inserted, we search for its predecessor
$(x_\ell, y_\ell)$ and its successor $(x_r, y_r)$ in the front of non-dominated
points.
This way we can either find if $(x, y)$ is dominated by $(x_r, y_r)$ or if it
dominates $(x_\ell, y_\ell)$.
We then update the front and the range search tree appropriately.

To build the delete-only structure, we insert points one by one in any order
as described above.
When a point $(x, y)$ is deleted, we search for its predecessor
$(x_\ell, y_\ell)$ and its successor $(x_r, y_r)$ in the front and find what are
the points in the range $(x_\ell, x_r)$ that become non-dominated, that is
what are new maximums of nodes in the range search tree after removal of
$(x, y)$ from appropriate nodes.
These new non-dominated points are added to the front and each interval from the
decremental structure is activated only at most once. Thus, the time charged to
each interval in the collection is bounded by $\Oh(\log n)$.
\end{proof}

\subsection{Decremental structure}

For $I_i \in \PS_j$, we define $\LCdecr(I_i)$ to be the next greedy choice in
$\AS_j$ after $I_i$.

\begin{proposition}
  The set of greedy predecessors of $I_i$
  ($\{I_{i'}\ :\ \LCdecr(I_{i'}) = I_i\}$) forms a continuous range of active
  intervals in $\AS_j$.
  \label{proposition:predecessors_range}
\end{proposition}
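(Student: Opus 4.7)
The plan is to reduce the statement to a simple monotonicity observation about active intervals, so I would start by restricting attention to $I_{i'} \in \AS_j$ and listing the active intervals in their natural order $J_1 \prec J_2 \prec \cdots \prec J_k$. As remarked right after Definition~\ref{def:active_inactive}, a collection of only active intervals is monotonic, so $s_{J_a}$ and $f_{J_a}$ are both strictly increasing in $a$. This is the structural fact that makes everything else almost automatic.

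Next I would unfold the definition of $\LCdecr(J_a)$: it is the element of $\AS_j$ with the smallest end coordinate (equivalently, smallest index in the order above) whose start coordinate is at least $f_{J_a}$, or $\perp$ if no such element exists. Writing $\LCdecr(J_a) = J_{b(a)}$, with the convention $b(a) = +\infty$ when the value is $\perp$, the heart of the argument is the claim that $a \mapsto b(a)$ is weakly nondecreasing. Indeed, for $a < a'$ we have $f_{J_a} < f_{J_{a'}}$, so the set of indices $b$ with $s_{J_b} \ge f_{J_{a'}}$ is contained in the set of indices $b$ with $s_{J_b} \ge f_{J_a}$, and therefore the minimum of the former is at least the minimum of the latter.

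Finally, the preimage of any single value under a nondecreasing function on a totally ordered domain is a contiguous interval, so $\{J_a \in \AS_j : \LCdecr(J_a) = I_i\}$ is a contiguous range of $\AS_j$, exactly as claimed. There is no real obstacle beyond a short bookkeeping check that the boundary value $\perp$ behaves as the ``maximum'' element and does not disrupt the monotonicity step; all other ingredients are immediate from the linear order on active intervals.
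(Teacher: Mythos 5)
Your proof is correct and takes essentially the same route as the paper: both arguments rest on showing that $\LCdecr$ is nondecreasing with respect to $\prec$ on $\AS_j$ and then observing that the preimage of any single value under a monotone map is contiguous. The only difference is cosmetic — you supply the short verification of monotonicity (the feasible set $\{b : s_{J_b} \ge f_{J_a}\}$ shrinks as $a$ increases), which the paper merely asserts.
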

\begin{proof}
  For any active intervals $I_1$, $I_2$, we have
  $I_1 \prec I_2 \Rightarrow \LCdecr(I_1) \preceq \LCdecr(I_2)$, so if
  there are three active intervals $I_1 \prec I_3 \prec I_2$ such that
  $\LCdecr(I_1) = \LCdecr(I_2)$ then also $\LCdecr(I_3) = \LCdecr(I_1)$.
\end{proof}

Intervals of $\AS_j$ form a forest where a node representing an
interval $I_i$ is the parent of $I_{i'}$'s node when
$\LCdecr(I_{i'}) = I_i$.
As in the previous section, we add an auxiliary
interval to make this representation a tree, we denote it $\TS_j$ and call it
\emph{a greedy tree of the part $\PS_j$}.
Greedy predecessors of $I_i$ are the children of node $I_i$ in the greedy tree.
We stress that the greedy tree is built only for the intervals of the
decremental collection.

We internally represent the greedy tree as an augmented top tree $\TT_j$
\cite{alstrup}.
This allows maintaining underlying fully dynamic forest (updates are
insertions/deletions of edges and changes to node/edge weights).
Because deletions and activations of intervals in the decremental structure
may change values of $\LCdecr(\cdot)$ for many nodes, we slightly alter the
structure as described in \cref{subsection:top_tree}.
This is also one of the reasons why one cannot apply techniques described in
\cite{gavruskin} to solve even the decremental variant of DIS despite being able
to efficiently maintain the (monotonic) set of active intervals.

When $I_i \in \AS_j$ is to be deleted, its children
$C = c_1 \prec c_2 \prec \dots \prec c_\ell$ have to connect to
other nodes of the greedy tree. Let $I_{k+1} = \LCdecr(I_i)$ before deletion and
$I_1 \prec I_2 \prec \dots \prec I_k$ are the activated intervals after removing
$I_i$. Note that $I_k \prec I_{k+1}$.
Other nodes than the elements of $C$ do not change its parent.

We first observe that $I_1, I_2, \dots, I_{k+1}$ are the only possible parents
for nodes in $C$, remind the fact that
$I_{i'} \prec I_{i''} \Rightarrow \LC(I_{i'}) \preceq \LC(I_{i''})$ and use
\cref{proposition:predecessors_range} to see that some (possibly empty) prefix
of children sequence ($c_1, c_2, \dots, c_{r_1}$) has to be connected to $I_1$,
then the next range ($c_{r_1 + 1}, c_{r_1 + 2}, \dots, c_{r_2}$) has to be
connected to $I_2$ and so on until finally some suffix of children sequence
($c_{r_k + 1}, c_{r_k + 2}, \dots, c_\ell$) has to be connected to $I_{k+1}$.
We use binary search on the children sequence to find indices
$r_1, r_2, \ldots, r_k$ in this order.
We update the parents of the nodes in the found ranges in the greedy tree as
described in \cref{subsection:top_tree} and it takes $\Oh(\polylog n)$ per
each \textbf{activated} interval.

Using the appropriate query to the top tree, we can resume the execution of the
greedy algorithm restricted to $\AS_j$ from any $I_i \in \AS_j$ in
$\Oh(\polylog n)$ time.

\subsection{Top tree}

\label{subsection:top_tree}

The underlying information maintained in $\TT_j$ is chosen to compute the
following:
\begin{itemize}
  \setlength{\itemsep}{0pt}
  \item weighted level ancestors,
  \item nearest marked ancestors,
  \item the total path weight from a node to the root (the sum of weights).
\end{itemize}
The discussion on how to maintain information that allows efficient computation
of the above in $\TT_j$ can be found in \cite{alstrup}.

$\TT_j$ represents an underlying modified greedy tree $\TSm_j$, namely, we
binarize the tree by reorganizing the children of each non-leaf node and adding
auxiliary nodes as presented in \cref{fig:ladder}.
A node in such a modified greedy tree that represents an actual interval has
a weight $1$, all other auxiliary nodes have a weight $0$.
The weight of the path between nodes is the sum of the weights of the nodes on
the path (including the endpoints).
This way, the weight of a path from a node representing an interval $I_i$ to the
root of the modified greedy tree represents the number of intervals chosen by
the greedy algorithm from $I_i$.

\FIGURE{t}{0.44}{ladder}{
  A part of a greedy tree is shown on the left and the modified greedy tree
  represented by $\TT_j$ is shown on the right.
  We assume $c_1 \prec c_2 \prec c_3 \prec \ldots \prec c_k$.
  One can retrieve $i$-th child of a node by querying for the level ancestor
  from node $q$ (ignoring the weights of nodes).
}
$\AS_j$ is always monotonic so we use $\prec$ order on children.
This way, we can update values of $\LCdecr(\cdot)$ for
a range of children of a node in $\Oh(\polylog n)$ time by the appropriate
splits and joins in $\TT_j$. Apart from auxiliary nodes,
pre-order traversals of $\TS_j$ and $\TSm_j$ are equal.

$\TSm_j$ and $\TT_j$ are only internal representations of $\TS_j$
that enable efficient implementation of the necessary operations. Any updates of
$\TS_j$ are naturally translated into updates of $\TSm_j$ and $\TT_j$ or were
described above. We proceed with describing the further details on $\TS_j$.

\begin{definition}
  For an interval $I_i \in \AS_j$, we define its \emph{depth} as the depth in
  $\TS_j$. The set of intervals of the same depth $d$ is called \emph{a layer}
  $d$ in $\TS_j$ (or $\PS_j$).
\end{definition}

\begin{remark}
  We already have all the ingredients for the algorithm to solve the delete-only
  DIS variant in $\Oh(\polylog n)$ time.
  In this case, we do not partition intervals nor use a buffer.
  Instead, we only use the top tree representing the greedy tree of all the
  active intervals in the whole decremental collection of intervals.

  Similarly, we remind that the structure for maintaining the subset of active
  intervals can be also maintained for the insert-only variant of DIS
  (\cref{lemma:decremental_active}). Now we also observe that we can maintain
  the greedy tree when the intervals are only inserted.
  A new interval $I_i$ may only improve $\LC(\cdot)$ for some continuous range of
  intervals and we can binary search the endpoints of this range.
  To account for the cost of reconnecting these nodes, which may have many
  different parents, we observe that for any insertion, there is only at most
  one interval that loses a child in the greedy tree and is not deactivated.
  We charge the time of reconnection of the range of its children to the
  insertion of $I_i$.
  We charge the time needed to reconnect other nodes to the insertion of their
  (deactivated, thus actually deleted) parent.
  This establishes the time complexity of the insert-only variant of DIS to
  $\Oh(\polylog n)$.
\end{remark}

\subsection{Buffer}
\label{subsection:buffer}

%With the range search tree of all internal intervals in the part, we can
%check, for an interval in $\BS_j$, if it is active in the
%collection of all internal intervals of $\PS_j$. An update in the part leads
%to naive recomputation of the active part of the buffer.
%It takes $\Ohtilde(n^{1/3})$ time.
%Nevertheless, we store the whole content of the buffer, including inactive
%intervals.

\begin{definition}
  For intervals $I_1 \in \AS_j$ and $I_2 \in \BS_j$, we say that $I_1$
  \emph{directly wants to switch} to $I_2$ if and only if all the following
  conditions hold:
  \begin{itemize}
    \setlength{\itemsep}{0pt}
    \item $I_1$ ends earlier than $I_2$,
    \item $I_1$ and $I_2$ are compatible,
    \item $I_2 \prec \LCdecr(I_1)$.
  \end{itemize}
\end{definition}

The aim of the above definition is to capture that sometimes the value of
$\LC(\cdot)$ may be different from $\LCdecr(\cdot)$. Note that if
$I_1$ directly wants to switch to $I_2$ it does not necessarily imply that
$\LC(I_1) = I_2$. It just means that $I_2$ is (in sense of $\prec$)
a better next greedy choice for $I_1$ than it appears from the computation in
the decremental collection.
Note that it also means that the greedy algorithm resumed from any node in the
subtree of $I_1$ in the greedy tree will not choose $\LCdecr(I_1)$.
Thus we define the following.
\begin{definition}
  For intervals $I_1 \in \AS_j$, $I_2 \in \BS_j$ we say that $I_1$
  \emph{wants to switch} to $I_2$ if and only if there exists an integer
  $k \ge 0$ such that $\LCdecr^k(I_1)$ directly wants to switch to $I_2$.
\end{definition}

\FIGURE{h}{0.65}{buffer}{
  An instance of DIS, example part. Intervals in the decremental collection are
  shown above the dotted line and buffer intervals are below. Dashed arrows
  connect intervals with their respective $\LCdecr(\cdot)$. Here intervals
  1, 2, 3 and 4 want to switch to B1 (3 and 4 directly) and interval 3 wants to
  (directly) switch to B2.
}

\begin{proposition}
  For an interval $I_i \in \BS_j$, there exist an integer $d$ such that the set
  of intervals in $\AS_j$ that directly want to switch to $I_i$ is either:
  \begin{itemize}
    \setlength{\itemsep}{0pt}
    \item a continuous range of a layer $d$,
    \item a suffix of layer $d$ and a prefix of layer $d+1$.
  \end{itemize}
  \label{prop:switching_ranges}
\end{proposition}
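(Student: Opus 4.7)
The plan is to reduce ``directly wants to switch to $I_i$'' to two monotone constraints on $\AS_j$, identify the resulting set as a $\prec$-contiguous range, and then bound how that range can straddle the layers of $\TS_j$. Writing $I_i = (s_i, f_i)$, the compatibility bullet together with ``$I_1$ ends earlier than $I_i$'' forces $f_I \le s_i$, and because $s_i < f_i$ this already implies the first bullet; the third bullet is exactly $f_{\LCdecr(I)} > f_i$. So the set of interest is
\[
  R \;=\; \bigl\{\, I \in \AS_j : f_I \le s_i \text{ and } f_{\LCdecr(I)} > f_i \,\bigr\}.
\]
On the monotonic set $\AS_j$ both $f_I$ and $f_{\LCdecr(I)}$ are weakly increasing in $\prec$---the latter is the monotonicity $I_1 \prec I_2 \Rightarrow \LCdecr(I_1) \preceq \LCdecr(I_2)$ used throughout the section. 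Hence $R$ is the intersection of a $\prec$-prefix with a $\prec$-suffix, a contiguous $\prec$-range $[I_L, I_R]$. Iterating the same monotonicity shows that depth in $\TS_j$ is itself weakly monotone along $\prec$, so each layer is a contiguous $\prec$-range of $\AS_j$ and $R$ meets each layer in a contiguous block.

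The crux is to show that $R$ touches at most two adjacent layers, i.e.\ that the depths of any $I_1, I_2 \in R$ with $I_1 \prec I_2$ differ by at most one. Put $J_k = \LCdecr(I_k)$, so $J_1 \preceq J_2$. If $J_1 = J_2$ the depths coincide. Otherwise $J_1 \prec J_2$, and the key sub-claim I would prove is $\LCdecr(J_1) \succeq J_2$: because $I_2 \in R$ we have $f_{I_2} \le s_i < f_i < f_{J_1}$, so any $K \in \AS_j$ with $s_K \ge f_{J_1}$---in particular $K = \LCdecr(J_1)$---satisfies $s_K > f_{I_2}$ and is therefore compatible with $I_2$. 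As a compatible candidate, $K$ cannot $\prec$-precede the leftmost-compatible $J_2 = \LCdecr(I_2)$, giving $\LCdecr(J_1) \succeq J_2$. Monotonicity of depth along $\prec$ then yields $\mathrm{depth}(J_1) \le \mathrm{depth}(J_2) + 1$ and hence $\mathrm{depth}(I_1) \le \mathrm{depth}(I_2) + 1$.

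With this bound the depths across $R$ take at most two adjacent values, and since depth is monotone in $\prec$ while each layer is contiguous in $\AS_j$, $R$ either lies inside a single layer or decomposes in $\prec$-order as a suffix of one layer together with a prefix of the adjacent layer---exactly the two cases in the proposition (with $d$ chosen as the common depth or as the smaller of the two, respectively). The main obstacle is identifying the sub-claim $\LCdecr(J_1) \succeq J_2$: it is not a property of $J_1$ and $J_2$ in isolation, but is forced by the fact that the buffer interval $I_i$ sits between $I_2$ and $J_1$ on the time axis via $f_{I_2} \le s_i < f_i < f_{J_1}$, ensuring that every interval in $\AS_j$ starting after $J_1$ is automatically a candidate for $\LCdecr(I_2)$.
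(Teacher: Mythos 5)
Your proof is correct. The $\prec$-contiguity argument (writing the switching set as the intersection of a $\prec$-prefix determined by $f_I\le s_i$ with a $\prec$-suffix determined by $f_{\LCdecr(I)}>f_i$, via monotonicity of $\LCdecr$) matches the paper's first step exactly. For the two-layer bound you take a genuinely different, if closely related, route: from the sandwich $f_{I_2}\le s_i<f_i<f_{\LCdecr(I_1)}$ (which holds for any $I_1\prec I_2$ both in the set) you derive the inequality $\LCdecr(\LCdecr(I_1))\succeq\LCdecr(I_2)$ and push depth-monotonicity through it to get $\mathrm{depth}(I_1)\le\mathrm{depth}(I_2)+1$ directly. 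The paper instead uses that same sandwich to show any two elements of the switching set must overlap (so the set is an antichain under compatibility), and concludes that a node and its greedy-tree parent cannot both lie in it, which together with contiguity caps the layer span at two. Both arguments rest on the identical observation; yours is a bit more computational but skips the implicit ``antichain $\Rightarrow$ spans at most two layers'' step, while the paper's antichain phrasing is arguably a touch more transparent. Both quietly rely on the artificial root being $\prec$-maximal to cover the degenerate case where $\LCdecr(\LCdecr(I_1))$ would otherwise be the root.
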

\begin{proof}
  Let $I_1 \prec I_3 \prec I_2$ and assume that $I_1$ and $I_2$ want to switch
  to $I_i$. Then, also $I_3$ wants to switch to $I_i$:
  $I_i$ ends earlier than $\LCdecr(I_3)$ because $I_1$ wants to switch
  and $I_3$ can switch to $I_i$ because $I_2$ can. This shows that the nodes
  that want to switch to $I_i$ form a continuous range in $\prec$.
  Active intervals that directly want to switch to any particular $I_i$ are
  pairwise overlapping. Indeed, with of any two compatible intervals
  $I_1 \prec I_2$, we would have $\LC(I_1) \preceq I_2$ so $\LC(I_1)$ ends
  earlier than any buffer interval compatible with $I_2$ to the right of $I_2$.
  This also proves that a node and its parent in the greedy tree cannot both
  directly want to switch to the same buffer interval thus completing the proof.
\end{proof}

\cref{prop:switching_ranges} shows that the actual size of the information
needed to notify the intervals from $\AS_j$ that want to directly switch to
a particular buffer interval is short. For each buffer interval, it is enough to
remember endpoints of at most two ranges.

We want to efficiently store also indirect switching.
Intervals that want to switch to $I_i$ are the nodes in subtree of any node in
ranges from \cref{prop:switching_ranges}.
For range from $I_1$ to $I_2$ on layer $d$ that wants to directly switch to
$I_i$, any node $I_3$ on layer $d' = d + k \ge d$ satisfying
$I_1 \preceq \LCdecr^k(I_3) \preceq I_2$ wants to switch to $I_i$, see
\cref{fig:switching}.

Note that if we traverse the greedy tree in BFS order (visiting children
left-to-right) we obtain exactly $\prec$ order.
Thus, when comparing two intervals on the same layer we can just see
which one is earlier in the pre-order traversal of $\TS_j$. This way we can
treat layers as sorted collections of intervals (actually, subranges of
$\prec$).

We use a 2D range search tree indexed by depth and position in the pre-order
traversal of intervals in $\AS_j$.
The structure allows us to store a collection of three-sided rectangles, so that
given query point we can check if it is contained in at least one of the
rectangles.
To mark nodes as in \cref{fig:switching} we add
$[d, +\infty) \times [\operatorname{pre}(I_1), \operatorname{pre}(I_2)]$
to the tree.

\FIGURE{t}{0.36}{switching}{
  Indirect possibility of switching to $I_i \in \BS_j$ for $I_3 \in \AS_j$.
  Nodes of $d$-th layer in range from $I_1$ to $I_2$ directly want to switch to
  $I_i$.
  In the gray area are the nodes that want to switch to $I_i$.
  \label{fig:switching}
}

An interval may want to switch to multiple intervals but the actual switching
point for any $I_i \in \AS_j$ is the earliest in $\prec$ (the deepest in
$\TS_j$) interval that wants to directly switch to a buffer interval on the path
from $I_i$ to the root in $\TS_j$.
We can deduce the actual earliest switching to buffer interval from any
$I_i \in \AS_j$ on layer $d$ in $\Oh(\polylog n)$ time by using a binary search
on depth $d' \le d$, each time querying the 2D range search tree if a point
$(d', I_i)$ is covered by at least one rectangle. The result for the prefix
of the path until reaching the buffer can be obtained from the top tree $\TT_j$.
We recreate the whole range search tree after an update in the part.

For any $I_i \in \BS_j$ we store the total length of the path to the root of
$\TS_j$ (this is the internal result for $I_i$ in $\PS_j$) and the latest actual
interval of $\PS_j$ just before reaching the root (this is the exit for $I_i$
in $\PS_j$).
This information is recomputed for all buffer intervals in $\PS_j$ using dynamic
programming by iterating the buffer intervals by decreasing end times as
follows.
For $I_i \in \BS_j$, we compute $\LC(I_i)$ and if it is a buffer interval, we
use its exit result and its internal result plus $1$ as the information for
$I_i$ (and, by the order of the computation, we already know these).
If $\LC(I_i) \in \AS_j$, we query the decremental collection for the next buffer
interval after $I_i$ selected by the greedy algorithm as described above and
combine its result with the prefix of the traversed path from $\LC(I_i)$ in the
decremental collection.
This is computed in $\Ohtilde(n^{1/3})$ time.

\section{Interval scheduling on multiple machines}

We stress that we assume that there are constant number of machines thus we are
going to ignore $\Oh(\poly m)$ factors in time complexities.
The difference between naive application of standard techniques and our
algorithms is negligible when $m$ is large.

As the main idea of our algorithm is to efficiently simulate the folklore greedy
algorithm for IS+ (described in \cite{faigle,carlisle}), we now remind it.
The intervals are considered separately by the earliest end time. For each
considered interval, if there is no available machine at the time, the job is
rejected. Otherwise, it is accepted and assigned the available machine that
was busy at the latest time. The proof of correctness is a standard exchange
argument.

The state of the partial execution (up to some time $t$) of the greedy algorithm
can be fully described by the sequence of length $m$, where $i$-th entry
describes which interval was last scheduled on $i$-th machine before or at time
$t$.
Some of the entry intervals to $\PS_j$ may not belong to $\PS_{j-1}$ if some
machine had not accepted any intervals in $\PS_{j-1}$.
At the same time, we want to preprocess information only for tuples of intervals
from $\PS_j$, thus we need the following additional notation.
\begin{definition}
The \emph{greedy state} $G_t$ (at time $t$) is the (multi)set of $m$ input
intervals.
Each element $I_i = (s_i, f_i) \in G_t$ means that at time $t$ there is
a machine that was busy up to time $f_i$. We use elements $\overline{I_i}$
to indicate that there is a machine which was busy up to time $s_i$.
\end{definition}
$\overline{~\cdot~}$ indicates that the particular machine is blocked for all
intervals that start too early.
Thus, despite each interval can only be selected once, we may want to mark that
some machines are busy up to the same time. For this reason, we decided to use
multisets for greedy states.
$\overline{(s_i, f_i)}$ can be simulated by an artificial interval
$(-\infty, s_i)$.

The greedy algorithm only considers values of $t$ that are end times of
intervals $I_i = (s_i, f_i)$ in the input.
We slightly abuse the notation and use $G_k$ to denote the greedy state at
time $f_k$ and assume the intervals are ordered according to the order of the
IS+ algorithm i.e.
$f_1 < f_2 < \ldots < f_n$.
To not consider cases with $|G_k| < m$ we add
$m$ pairwise overlapping intervals ending all earlier than the beginning
of any actual input interval.

If $G_{k-1} \neq G_k$, exactly one element of $G_{k-1}$ needs to be updated
to obtain $G_k$.
It is the one that is ending the latest among the elements of $G_{k-1}$
compatible with $I_k$.
One can see the same from a slightly different perspective.
Let assume that $i$ is the index for which
$\LC(I_i)$ is the earliest ending interval among
$G_k$. Then $G_k = G_{k+1} = \ldots = G_{k'-1} \neq G_{k'}$ and
$G_{k'} = G_k \setminus \{I_i\} \cup \{\LC(I_i)\}$. We call $G_{k'}$
the \emph{next greedy state after $G_k$} and denote it $\NextG(G_k)$.
Because $\LC(\cdot)$ can be computed in $\Oh(\polylog n)$ time using
the appropriate structure as described in \cref{section:n12_solution},
we iterate through all candidates for $I_i$ in the greedy state and thus have
the following.

\begin{corollary}
  $\NextG(G_k)$ can be computed in $\Ohtilde(m)$ time for any $G_k$.
  \label{prop:next_greedy_state}
\end{corollary}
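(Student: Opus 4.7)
The plan is to turn the informal description that precedes the statement into a concrete $\Ohtilde(m)$ procedure. By definition, $\NextG(G_k) = G_k \setminus \{I_i\} \cup \{\LC(I_i)\}$, where $I_i$ is the element of $G_k$ whose leftmost compatible interval has the smallest finish time among $\{\LC(I) : I \in G_k\}$. So to compute $\NextG(G_k)$ it suffices to identify this $I_i$ and perform a single substitution on the multiset $G_k$.

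First I would iterate over all (at most) $m$ elements of $G_k$ and, for each one, compute its $\LC(\cdot)$ value by a single query to the global successor structure storing all intervals sorted by finish time, which was described at the beginning of \cref{section:one_machine} and supports $\LC$ queries in $\Oh(\polylog n)$ time. Barred entries $\overline{I_i} \in G_k$ are handled by the convention already introduced, namely treating them as the artificial interval $(-\infty, s_i)$, so that $\LC(\overline{I_i})$ is the earliest-ending interval with start time at least $s_i$; this uses the same successor structure without modification.

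Having the $m$ candidates, I would scan them once to find the candidate with the smallest finish time, which singles out the element $I_i \in G_k$ to be replaced. I would then return the multiset $G_k \setminus \{I_i\} \cup \{\LC(I_i)\}$ (treating $G_k$ as a multiset of size $m$, this substitution is implemented in $\Oh(m)$ time). The total cost is $m$ successor queries plus a linear scan on a set of size $m$, giving $\Oh(m \cdot \polylog n) = \Ohtilde(m)$.

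There is really no hard step here: correctness follows directly from the characterization of $\NextG$ stated before the corollary, and the running time is determined by the $m$ invocations of $\LC(\cdot)$. The only points to double-check in the write-up are that the global successor structure is indeed available throughout the algorithm (it is, since it is maintained independently of the partition into parts) and that the barred elements of $G_k$ do not require a separate data structure, since their $\LC$ queries reduce to ordinary successor queries on the same structure.
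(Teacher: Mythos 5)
Your proof is correct and takes essentially the same approach as the paper's (which is really only sketched in the sentence preceding the corollary): compute $\LC(I)$ for each of the $m$ elements of $G_k$ with a $\polylog n$-time query, select the element whose $\LC$-value ends earliest, and substitute. The only detail worth spelling out — which the paper itself elides when restating the rule in terms of $\LC$ — is the tie-break: when several $I \in G_k$ share the same minimizing $\LC(I)$, the greedy rule removes the one with the latest finish time (the machine ``busy at the latest''), consistent with the paper's first characterization a few lines above the corollary.
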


We use insights from \cref{section:one_machine} and \cref{section:n12_solution}
and split the intervals into $\Oh(n^{1 - 1/m})$ parts of size at most
$\Oh(n^{1/m})$. But now the part
to which the interval belongs is determined by its end. Other
details like epochs, splitting and merging the parts remain the same.
We restrict $\LC(\cdot)$ to only consider intervals in the same part as the
argument of the operation (it can return $\perp$).
We build an additional structure for internal intervals in each part and rebuild
it every update in the part. As in the case of interval scheduling on one
machine, our goal is to be able to efficiently handle (in $\Oh(\polylog n)$
time) a query for the internal result (the number of accepted intervals) and the
exit greedy state from the part for a given entry greedy state $G_k$ in the part
-- we call this the \emph{part query from the greedy state $G_k$}.

Notice that during the execution of the greedy algorithm up to $\PS_{j-1}$,
it may happen that some machine will not accept any new interval in $\PS_{j-1}$,
so the exit greedy state coming from $\PS_{j-1}$ may contain intervals also from
earlier parts.
Let us now describe how to translate such an exit greedy state coming from
$\PS_{j-1}$ into an entry greedy state of $\PS_j$, so we can later only consider
the content of one part.
We observe that the decisions of the greedy algorithm only depend on the
relative order of endpoints of the considered intervals.
If a machine was busy up to time $t$ and there are no intervals starting before
time $t' > t$, we can safely assume that the machine is busy up to time $t'$
without changing the execution of the greedy algorithm.
Thus, we round up the end of each interval in the greedy state to the earliest start
of some interval in $\PS_j$. See \cref{fig:parts_merge}.
We stress that the result of rounding is not necessarily part of the solution
generated by our algorithm. It just indicates times up to which the machines
are busy.
After computing the exit greedy state for $\PS_j$, we inspect if there are
machines that have not accepted any intervals from $\PS_j$ and revert the
rounding for these.

\FIGURE{t}{0.36}{parts_merge}{
  Translation of an exit greedy state $G = \{1, 2, 3, 4\}$ from part
  $\PS_{j-1}$.
  Each interval of $G$ is rounded to the earliest starting interval in $\PS_j$
  that is later than the end of the interval (denoted by dashed directed edge).
  Thus, we can assume that the entry greedy state in $\PS_j$ is
  $\{\overline{5}, \overline{6}, \overline{7}, \overline{7}\}$.
  \label{fig:parts_merge}
}

We stick to \cref{def:active_inactive}, but we cannot make direct use
of \cref{lemma:only_active} because in the case of multiple machines it may
happen that inactive intervals are part of the optimal solution.
As these intervals may not form a monotonic collection, we redefine $\prec$
order as follows:
$(s_1, f_1) \prec (s_2, f_2) \Leftrightarrow f_1 < f_2$.
We still maintain the greedy tree $\TS_j$ and the top tree
$\TT_j$\footnote{We could also use simpler structures as we only need a
subset of operations provided by the top tree and we can afford to rebuild the
structure from scratch every update.} as described for one machine.
We identify intervals with the nodes representing them in $\TS_j$.

\begin{lemma}
  Let $I_i = (s_i, f_i)$ be an inactive interval and let
  $I_{i'} = (s_{i'}, f_{i'})$ be the latest (in $\prec$)
  interval contained inside $I_i$.
  If $I_i \in G_i$ then also $I_{i'} \in G_i$.
  \label{lemma:inactive_active_inside}
\end{lemma}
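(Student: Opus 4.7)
The plan is to establish two facts, from which the lemma follows immediately: (i) the greedy algorithm accepts $I_{i'}$, and (ii) once accepted, $I_{i'}$ stays on its assigned machine at least through time $f_i$. Together they give $I_{i'} \in G_i$.

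For (i), I would argue by contradiction. Assume $I_{i'}$ is rejected. Then at the moment $I_{i'}$ is processed, every machine's last-assigned interval ends strictly later than $s_{i'}$ (otherwise the machine would be free and $I_{i'}$ would be taken); call that end time $e_M$ for machine $M$. Since the greedy processes intervals in increasing order of end time, we also have $e_M < f_{i'}$. Between the processing of $I_{i'}$ and the processing of $I_i$, only intervals with end times in $(f_{i'}, f_i)$ are considered, so any replacement on any machine $M$ strictly increases its last end time above $f_{i'}$, hence above $e_M > s_{i'}$. Thus, just before $I_i$ is considered, every machine's last end time exceeds $s_{i'} > s_i$, so no machine is free at time $s_i$ and $I_i$ would be rejected, contradicting $I_i \in G_i$.

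For (ii), let $M$ be the machine to which the greedy assigns $I_{i'}$. Since $I_i$ and $I_{i'}$ overlap (as $s_i < s_{i'} < f_{i'} < f_i$), $I_i$ is assigned to a different machine $M' \ne M$, so $I_i$ itself cannot be the interval that displaces $I_{i'}$. Suppose some $I_k$ with $f_{i'} < f_k < f_i$ is assigned to $M$ before time $f_i$, overriding $I_{i'}$. Compatibility with $I_{i'}$ forces $s_k \ge f_{i'}$, so $s_k > s_i$; together with $f_k < f_i$ this shows that $I_k$ is contained in $I_i$ and satisfies $I_{i'} \prec I_k$, contradicting the choice of $I_{i'}$ as the latest (in $\prec$) interval contained in $I_i$.

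The main obstacle is step (i): it requires carefully tracking how the multiset of machine end times evolves between the processing of $I_{i'}$ and the processing of $I_i$, and leveraging the fact that the greedy processes intervals in increasing order of end time so that replacements monotonically raise each machine's last end time. Step (ii) is then a short compatibility argument that uses the extremality of $I_{i'}$ in a direct way.
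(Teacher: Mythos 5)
Your proof is correct and follows the same route as the paper's own argument: establish that the greedy accepts $I_{i'}$, then show no interval processed between $I_{i'}$ and $I_i$ can displace $I_{i'}$ from its machine. For step~(ii) your containment/extremality argument is essentially identical to the paper's; for step~(i), the paper simply states the domination fact (``any interval compatible with $I_i$ is also compatible with $I_{i'}$, and $I_{i'}$ is processed earlier''), whereas you supply the missing justification by tracking the monotone evolution of each machine's busy-until time. That extra detail is welcome, and your explicit note that $I_i$ itself cannot displace $I_{i'}$ (since $s_i < f_{i'}$ makes $I_i$ incompatible with machine $M$) closes a small gap the paper leaves implicit in its phrase ``up to time $f_i$.''
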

\begin{proof}
  Any interval compatible with $I_i$ is also compatible with $I_{i'}$ and
  $I_{i'}$ ends earlier than $I_i$.
  This means that if $I_i$ is accepted then also $I_{i'}$ is (at time $f_{i'}$).
  From time $f_{i'}$ up to time $f_i$ the machine that accepted
  $I_{i'}$ cannot accept other interval: it would have to start
  after $f_{i'}$ and end before $f_i$ thus violating our assumption that
  $I_{i'}$ is the latest interval contained inside $I_i$. This implies that
  $I_{i'} \in G_i$.
\end{proof}

\begin{lemma}
  Let $G = \{I_1, I_2, \ldots, I_m\}$ be a greedy state for which all elements
  are active intervals.
  Let $I_* = (s_*, f_*)$ be the earliest (in $\prec$) interval being a common
  ancestor of any pair of elements of $G$. Let $\operatorname{N}(I_i, I_*)$ be
  a prefix of $\{I_i, \LC(I_i), \LC^2(I_i), \ldots\}$ intervals preceding
  $I_*$ and let $\operatorname{S}(I_i, I_*)$ be the latest of
  $\operatorname{N}(I_i, I_*)$.

  Then
  $\operatorname{N}(I_1, I_*) \cup \operatorname{N}(I_2, I_*) \cup \ldots \cup \operatorname{N}(I_m, I_*)$
  are the only elements scheduled by the greedy algorithm for IS+ resumed from
  $G$ before reaching time $f_*$.
  Additionally, just before time $f_*$ the greedy state of the algorithm is
  $\{ \operatorname{S}(I_i, I_*) : i \in \{1, 2, \ldots, m\} \}$.
  \label{lemma:greedy_tree_skip}
\end{lemma}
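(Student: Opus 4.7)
The plan is to exploit that $I_*$ is the earliest common $\LC$-ancestor shared by some pair of elements of $G$, which forces the $m$ paths $\operatorname{N}(I_1, I_*), \ldots, \operatorname{N}(I_m, I_*)$ to be pairwise disjoint. Consequently the machines can advance along their respective paths independently up to the moment each reaches $\operatorname{S}(I_i, I_*)$, and only then does one of them step onto $I_*$ itself. The whole argument is an induction on the number of $\NextG$ applications starting from $G$, preserving the invariant that the $i$-th machine sits on some node of $\operatorname{N}(I_i, I_*)$.

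I would first establish pairwise disjointness: any $X \in \operatorname{N}(I_i, I_*) \cap \operatorname{N}(I_j, I_*)$ with $i \neq j$ is a common (non-strict) $\LC$-ancestor of $I_i$ and $I_j$ satisfying $X \prec I_*$, contradicting the minimality in the definition of $I_*$. This also immediately rules out collisions between machines in any intermediate state produced by the induction.

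The inductive step is the heart of the proof. Writing the current state as $\{J_1, \ldots, J_m\}$ with $J_i \in \operatorname{N}(I_i, I_*)$, the next interval on machine $i$'s path is $\LC(J_i)$, which is either still in $\operatorname{N}(I_i, I_*)$ or equals $I_*$. By \cref{prop:next_greedy_state}, $\NextG$ replaces the machine $i^*$ minimizing $\LC(J_{i^*})$ with $\LC(J_{i^*})$; by pairwise disjointness this new position coincides with no other $J_j$, so either the invariant is preserved (when $\LC(J_{i^*}) \neq I_*$) or the induction halts (when $\LC(J_{i^*}) = I_*$). Hence the set of intervals touched by the machines before termination is exactly $\bigcup_i \operatorname{N}(I_i, I_*)$: each element of $\operatorname{N}(I_i, I_*)$ is visited once as machine $i$ climbs one $\LC$-step at a time, and nothing outside this union can appear.

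Finally, I would pin down the state just before time $f_*$. When $\NextG$ first inserts $I_*$ into the state, the chosen machine $i^*$ must satisfy $J_{i^*} = \operatorname{S}(I_{i^*}, I_*)$. If some other $J_j$ were a strict $\LC$-predecessor of $\operatorname{S}(I_j, I_*)$ on its path, then $\LC(J_j)$ would still lie in $\operatorname{N}(I_j, I_*)$ and therefore satisfy $\LC(J_j) \prec I_* = \LC(J_{i^*})$, contradicting the minimality criterion used by $\NextG$ to select $i^*$. Hence every machine already sits at $\operatorname{S}(I_j, I_*)$ at that moment, which is the second claim of the lemma. The main obstacle is making the non-interference between machines airtight; this reduces entirely to the pairwise disjointness of the $\operatorname{N}(I_i, I_*)$'s, and that in turn is a direct corollary of the defining minimality of $I_*$, so the remainder of the argument is bookkeeping on the $\NextG$ rule.
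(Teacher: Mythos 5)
Your proof is correct and takes essentially the same approach as the paper's: pairwise disjointness of the chains $\operatorname{N}(I_i,I_*)$ (a direct consequence of the minimality of $I_*$) lets each machine advance along its own $\LC$-path independently, one $\NextG$-step at a time, until the first moment the minimum candidate equals $I_*$, at which point every machine must already sit at $\operatorname{S}(I_i,I_*)$. The paper states these same ideas more compactly, while you spell out the $\NextG$ bookkeeping and the final state identification in more explicit detail.
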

\begin{proof}
  First, we make a technical note that thanks to the artificial root added to
  form the greedy tree, the interval $I_*$ always exists.

  The candidates for values of $\NextG(G_t)$ are
  $G_t$s with exactly one of the intervals replaced by its $\LC(\cdot)$
  assigned to the same machine.
  Thus, by using this reasoning inductively for $\NextG^k(G)$ for increasing
  $k$, we observe that when moving forward along the path from any $I_i \in G$
  to the root in the greedy tree, at least until reaching some
  interval $\succeq I_*$, all the traversed intervals will be scheduled on the
  same machine as $I_i$. Additionally, for different $I_{i'}, I_{i''} \in G$,
  the paths from $I_{i'}$ and $I_{i''}$ in the greedy tree do not share any
  nodes that are $\prec I_*$ (by definition).
  This way, all and the only elements that are included in some greedy state
  after $G$ before considering $I_*$ are the elements of
  $\operatorname{N}(I_i, I_*)$ and also just before considering $I_*$
  all the latest elements of $\operatorname{N}(\cdot, I_*)$ are in the greedy
  state. See \cref{fig:merge_lemma}.
\end{proof}

\FIGURE{t}{0.45}{merge_lemma}{
  \cref{lemma:greedy_tree_skip} for $G = \{I_1, I_2, I_3\}$.
  Here we assume $I_6 \prec I_* \prec I_7$.
  Elements of $\operatorname{N(\cdot, I_*)}$ are filled dots.
  We have $\NextG^9(G) = \{I_4, I_5, I_6\}$ and $I_* \in \NextG^{10}(G)$.
  \label{fig:merge_lemma}
}

If the elements of greedy state $G$ are all active, we can naively compute $I_*$
as in \cref{lemma:greedy_tree_skip} by checking LCAs of all pairs of
intervals in $G$ in the greedy tree and then proceeding to the last interval
before $I_*$ independently from each node to obtain the last greedy state before
reaching $I_*$ as in \cref{fig:merge_lemma}. Thus we have the following.
\begin{corollary}
  Let $G = \{I_1, I_2, \ldots, I_m\}$ be a greedy state with only active
  intervals and let $I_*$ be defined as in
  \cref{lemma:greedy_tree_skip}. It is possible to compute both the smallest $k$
  for which $I_* \in \NextG^k(G)$ and the value of $\NextG^k(G)$ itself in
  $\Ohtilde(m^2)$ time.
  \label{prop:greedy_tree_skip}
\end{corollary}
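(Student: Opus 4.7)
The plan is to apply Lemma~\ref{lemma:greedy_tree_skip} directly: once $I_*$ is identified, the state $\NextG^{k-1}(G)$ just before $I_*$ is scheduled reduces to reading off $m$ ancestors $\operatorname{S}(I_i, I_*)$ via path queries on $\TT_j$, while $k$ itself is obtained by summing path lengths and then applying one more $\NextG$ step.

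First, I identify $I_*$ by brute force: for each of the $\binom{m}{2}$ unordered pairs $(I_i, I_j)$ of elements of $G$, perform an LCA query in $\TS_j$ via $\TT_j$ in $\Ohtilde(1)$ time, and keep the one earliest in $\prec$ (equivalently, deepest in $\TS_j$). The retained node is exactly $I_*$, at total cost $\Ohtilde(m^2)$.

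Second, for each $I_i \in G$ I compute $\operatorname{S}(I_i, I_*)$ and $|\operatorname{N}(I_i, I_*)|$ using $\TT_j$. When $I_*$ is an ancestor of $I_i$ in $\TS_j$ (checkable by one LCA query with $I_*$), a single weighted level-ancestor query at depth $\operatorname{depth}(I_*)+1$ yields $\operatorname{S}(I_i, I_*)$; otherwise, since end times along the root-path of $I_i$ are monotone increasing, an $O(\log n)$-step binary search using level-ancestor queries locates the deepest ancestor of $I_i$ with end time strictly below $f_*$. Given $\operatorname{S}(I_i, I_*)$, the count $|\operatorname{N}(I_i, I_*)|$ is then the weighted path length from $I_i$ to $\operatorname{S}(I_i, I_*)$ in $\TSm_j$, supplied by the path-weight query of $\TT_j$ (auxiliary binarization nodes have weight $0$, so the query returns exactly the number of real intervals). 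This step costs $\Ohtilde(m)$.

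Finally, by Lemma~\ref{lemma:greedy_tree_skip} the state $\NextG^{k-1}(G)$ equals $\{\operatorname{S}(I_i, I_*) : i \in \{1,\dots,m\}\}$ (replacing $\operatorname{S}(I_i, I_*)$ by $I_i$ when it is undefined, i.e.\ when $f_i \ge f_*$), and $k$ equals $1 + \sum_i (|\operatorname{N}(I_i, I_*)| - [I_i \prec I_*])$, computable in $O(m)$ arithmetic from the precomputed lengths. A single invocation of Corollary~\ref{prop:next_greedy_state} on this state, costing $\Ohtilde(m)$, incorporates $I_*$ and delivers $\NextG^k(G)$. Summing, the total time is $\Ohtilde(m^2)$. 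The main obstacles are the corner cases: $I_*$ may itself belong to $G$ (in which case $k=0$ and we return $G$ immediately), some $I_i$ may already have end time exceeding $f_*$ (so $\operatorname{N}(I_i, I_*)$ is empty and that entry is unchanged in the intermediate state), and several $\operatorname{S}(I_i, I_*)$ may simultaneously have $\LC = I_*$ --- the last is handled transparently by deferring the final step to Corollary~\ref{prop:next_greedy_state} rather than reasoning about tie-breaking by hand.
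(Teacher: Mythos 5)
Your proposal follows the same route the paper sketches: compute $I_*$ via $\binom{m}{2}$ LCA queries in $\TT_j$, then advance independently from each $I_i \in G$ to $\operatorname{S}(I_i, I_*)$ using the top-tree's level-ancestor and path-weight operations, and finally take one more $\NextG$ step. Your write-up is correct and in fact supplies details the paper's one-sentence argument leaves implicit (the formula for $k$ and the handling of elements of $G$ that already succeed $I_*$), so it is a faithful and slightly more explicit rendering of the intended proof rather than a different one.
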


\subsection{An $\Ohtilde(n^{1/2})$-time algorithm for two machines}
\label{subsection:two_machines_algo}

In this section, we focus on describing an efficient algorithm for dynamic
interval scheduling on two machines and prove the following.
\begin{theorem}
  There is a data structure for DIS2 that supports any sequence of $n$
  insert/delete/query on intervals in $\Ohtilde(n^{1/2})$ amortized time
  per each operation.
  \label{theorem:n12_two}
\end{theorem}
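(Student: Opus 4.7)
The plan is to extend the partitioning idea of \cref{section:n12_solution} to two machines. I split the intervals of $S$ into $\Oh(n^{1/2})$ parts of size at most $\Oh(n^{1/2})$, now by end time rather than start time, so that the entry greedy state for $\PS_j$ depends only on events that happened strictly before $\PS_j$. Epoch maintenance and rebalancing are inherited verbatim from the one-machine algorithm. Each update rebuilds the affected part from scratch in $\Ohtilde(n^{1/2})$ time, and each DIS2 query sweeps the parts in order of increasing end time and invokes one ``part query'' per part to push the current greedy state forward; if each part query runs in $\Oh(\polylog n)$ time, the whole query costs $\Ohtilde(n^{1/2})$.

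The per-part data structure for $\PS_j$ consists of (i) the greedy tree $\TS_j$ over the active intervals of $\PS_j$ together with its top-tree representation $\TT_j$, built from scratch on every rebuild via \cref{lemma:decremental_active}, and (ii) a table of precomputed answers for a family of $\Oh(n^{1/2})$ \emph{canonical} entry states. I declare an entry state to be canonical if it has the form $\{I, C\}$ with $C$ a child of $I$ in $\TS_j$; there are at most $|\TS_j|-1$ such pairs. The justification for focusing on this family is \cref{lemma:greedy_tree_skip}: for any entry state $G$ whose two elements are active intervals of $\PS_j$, the greedy advances, in $\Ohtilde(1)$ time by \cref{prop:greedy_tree_skip}, to the state $\{I_*, W\}$ produced just after scheduling the join point $I_*$, where $W = \min_\prec\{\operatorname{S}(A,I_*), \operatorname{S}(B,I_*)\}$ is a child of $I_*$, so $\{I_*, W\}$ is canonical and a single table lookup finishes the query. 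A general entry state possibly containing an artificial $\overline{\cdot}$-interval is first handled by $\Oh(1)$ explicit greedy steps until both machines hold real intervals of $\PS_j$, and any remaining inactive real element is then replaced via \cref{lemma:inactive_active_inside}.

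It remains to fill the canonical table in $\Ohtilde(n^{1/2})$ total time per rebuild. I process canonical states in decreasing order of $f_I$, so that whenever the greedy from $\{I, C\}$ reaches another canonical state, that state's answer has already been computed. To fill $\{I, C\}$ I invoke \cref{prop:next_greedy_state} to obtain the next greedy state $G'$; if $G'$ leaves $\PS_j$ then the internal count and exit state are read off directly, otherwise \cref{lemma:greedy_tree_skip} reduces $G'$ to a canonical state whose stored answer I combine with the length of the fast-forward path recovered from $\TT_j$. Each such reduction costs $\Oh(\polylog n)$ on $\TT_j$ plus a single $\LC$ query, so the table is filled in amortized $\Ohtilde(1)$ time per entry and $\Ohtilde(n^{1/2})$ in total, matching the per-update budget.

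The main obstacle is the structural claim underlying the dynamic program: every greedy advance from a canonical state either exits $\PS_j$ or, after $\Ohtilde(1)$ further greedy steps, reaches another canonical state of strictly larger $f_I$, and the total work charged over all canonical states stays within $\Ohtilde(n^{1/2})$. Acyclicity of the dependency graph follows immediately from the fact that every greedy step strictly increases the larger machine's end time, but bounding the amortized work requires a careful case analysis based on whether the next interval out of $\{I, C\}$ is compatible with both machines or only with the earlier-ending one. In the latter case the smaller machine can absorb several ``extra'' intervals before the state becomes canonical again, and these must be charged to distinct edges of $\TS_j$ in order to stay inside the precomputation budget.
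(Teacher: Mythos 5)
Your high-level architecture (partition by end time, per-part greedy tree and top tree, fast-forward via \cref{lemma:greedy_tree_skip} and \cref{prop:greedy_tree_skip}, a linear-size table of precomputed part-query answers filled by DP) matches the paper's, and the choice of \emph{child--parent canonical states} $\{I,C\}$ is a reasonable linear-size alternative to the paper's two tables $B[\cdot]$ (form~(b) with the direct $\prec$-successor) and $C[\cdot]$ (form~(c)). Your reduction of an all-active state to $\{I_*,\min_\prec(\operatorname{S}(A,I_*),\operatorname{S}(B,I_*))\}$ is correct, and in that regime the DP order argument goes through.

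However, the proposal has a genuine gap, and you have in fact named it yourself without resolving it. The greedy state can have the form $\{I_1,I_2\}$ with $I_1$ ending earlier and \emph{inactive}, and $I_2$ active --- this arises as form~(a) or~(b) of \cref{lemma:two-machine-cases}, and the paper explicitly notes that $I_1$ may be inactive there. In that situation \cref{lemma:greedy_tree_skip} does not apply (it requires all elements active), and \cref{lemma:inactive_active_inside} does not help either: it only tells you what else must be in the state \emph{when the inactive interval is first accepted}; it does not let you ``replace'' a lingering inactive $I_1$ that is stuck on one machine while the other machine absorbs a long chain $I_2,\LC(I_2),\LC^2(I_2),\ldots$ along its greedy-tree path. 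This chain can be $\Theta(n^{1/2})$ long before $I_1$ is ever replaced, so it is not ``$\Oh(1)$ explicit greedy steps''; and during those steps the state is never of the form $\{I,C\}$ with $C$ a child of $I$, so no table lookup can shortcut it. The paper's fix is a separate $\Ohtilde(n^{1/2})$-time precomputation: for every active interval $I_i$, the earliest interval not on the path from $I_i$ to the root of $\TS_j$ (and the path length up to that point). This is exactly what lets the part query jump from a form~(a) state with an inactive or out-of-tree first component to the moment that component is replaced, after which $B[\cdot]$ or $C[\cdot]$ (or in your scheme, a child--parent canonical state after one more \cref{lemma:greedy_tree_skip} step) applies. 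Your ``main obstacle'' paragraph is precisely this missing ingredient, and ``must be charged to distinct edges of $\TS_j$'' does not work as stated: the work is not a sum over distinct canonical states processed once each, it is a single query-time traversal whose cost must be $\Oh(\polylog n)$, and only the extra path-crossing precomputation makes that possible. The same issue recurs when translating a rounded entry state $\{\overline{A},\overline{B}\}$ from the previous part: one machine can stay artificial for $\Theta(n^{1/2})$ greedy steps while the other absorbs a single-machine chain, so that too cannot be dismissed as $\Oh(1)$ steps and requires the same kind of jump.

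So: same skeleton as the paper, a legitimate alternative canonical family, but the key step the paper devotes its auxiliary ``earliest interval off the path'' precomputation to is exactly the step you flag as an open obstacle rather than close.
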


\begin{lemma}
  There are only three possible forms of a greedy state for two machines.
  \begin{itemize}
    \item[(a)] $\{I_1, I_2\}$ where $I_1 \prec I_2$, $I_1$ and $I_2$ are
    compatible and $I_2$ is active,
    \item[(b)] $\{I_1, I_2\}$ where $I_1$ ends earlier than $I_2$,
    $I_1$ and $I_2$ are overlapping and $I_2$ is active,
    \item[(c)] $\{I_1, I_2\}$ where an active interval $I_1$ is fully contained
    inside (an inactive) $I_2$.
  \end{itemize}
  \label{lemma:two-machine-cases}
\end{lemma}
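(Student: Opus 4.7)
The plan is to prove the classification by induction on the index $k$ of the greedy state $G_k$. For the base case, $G_0$ consists of the two artificial intervals added at setup, both of the form $(-\infty,\cdot)$; with distinct end times, one strictly contains the other, placing $G_0$ in form (c) with the smaller one (trivially) active in $S$.

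For the inductive step I would first dispose of the no-change case: if processing $I_k$ triggers no state transition then $G_k = G_{k-1}$ and the invariant carries over. Otherwise $G_k = G_{k-1} \setminus \{Z\} \cup \{I_k\}$, where $Z$ is the latest-ending element of $G_{k-1}$ compatible with $I_k$. Writing $G_{k-1} = \{I_1, I_2\}$ with $I_1 \prec I_2$, the choice of $Z$ together with the requirement that the other element is incompatible (whenever $Z$ is the earlier-ending one) pins $s_k$ into a specific range, and combined with $f_k > \max(f_1, f_2)$ this determines the geometric relation of $I_k$ to the carried-over element. Enumerating the six subcases (three forms of $G_{k-1}$ times two choices of $Z$) shows that case~(a) of $G_{k-1}$ with $Z = I_1$ maps to form~(b) of $G_k$ when $s_k > s_2$ and to form~(c) when $s_k < s_2$, while each of the five other subcases maps deterministically to form~(a) or~(b).

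What remains is to verify the activity conditions of the form that $G_k$ falls in. When $G_k$ is in form~(c), the smaller (contained) element is the element of $G_{k-1}$ that survived the transition; tracking the subcases above shows it is the ``$I_2$'' of the case~(a) of $G_{k-1}$, which is already active by the inductive hypothesis, and since activity is a property of $S$ alone it still is. When $G_k$ is in form~(a) or~(b), the later-ending element is the freshly scheduled $I_k$, and I need to show that $I_k$ is active in $S$. The key tool is the observation that, for each machine, the end time of the interval currently occupying its slot is monotonically nondecreasing over time, because any replacement installs an interval whose start exceeds the previous end. Assuming for contradiction some $I' \in S$ with $I' \subsetneq I_k$, I would split on whether $I'$ was ever scheduled. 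If it was placed on some machine $M$, monotonicity forces $M$'s slot end at time $f_k^-$ to be at least $f_{I'} > s_{I'} > s_k$, so $I_k$ must go to the other machine; the element $X$ remaining on $M$ then either equals $I'$ (placing $I'$ in $G_{k-1}$ as the carried-over element, which with the geometric constraints defining form~(a) or~(b) of $G_k$ contradicts $I' \subsetneq I_k$) or properly replaces $I'$ (which forces $s_X \geq f_{I'} > s_k$, contradicting $X$'s position as the earlier-starting element in form~(a) or as the non-containing element in form~(b)). If $I'$ was never scheduled, then at time $f_{I'}^-$ both slot ends already exceeded $s_{I'} > s_k$, and monotonicity propagates this forward to $f_k^-$, contradicting compatibility of either slot with $I_k$.

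The main obstacle will be keeping the bookkeeping clean across the six transition subcases and the further subcases of the activity argument for $I_k$. Machine-slot monotonicity is the single reusable mechanism that reduces each subcase to a short inequality chain, and is what makes the proof tractable despite the number of cases.
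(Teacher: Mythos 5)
Your proof follows essentially the same route as the paper's---an inductive case analysis on greedy-state transitions, split by the form of $G_{k-1}$ and the geometric position of the next accepted interval---and is correct in substance; your main addition is the slot-end-monotonicity justification for the activity of the freshly accepted interval, which the paper asserts in cases (aa), (ab), (ba), (bb), (ca), (cb) without proof. One point needs tightening in your activity argument for a $G_k$ that lands in form~(b): concluding a contradiction from ``$X$'s position as the non-containing element in form~(b)'' is circular, since for two overlapping intervals non-containment is precisely a restatement of the active-$I_2$ condition that you are in the process of establishing for $G_k$. The fix is to observe that every transition subcase landing in form~(a) or form~(b) already gives $s_X < s_k$ directly from the inductive hypothesis applied to $G_{k-1}$ together with the position of the accepted interval (e.g.\ form~(b) with the earlier slot replaced yields $s_X = s_2 < f_1 < s_k$), and to cite that inequality instead of the form definition.
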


\begin{proof}
  First we assume, without losing generality, that the greedy algorithm
  considered at least two intervals and resumes from the greedy state $G_2$ that
  is of the form (a) and $I_1 = (s_1, f_1)$, $I_2 = (s_2, f_2)$. One can easily
  prepend any instance of DIS+ with few intervals to achieve this.

  Let assume that the next accepted interval by
  the greedy algorithm is $I_3 = (s_3, f_3)$ and the next greedy state after
  $G_2$ is $G_3$.
  There are three cases (as on \cref{fig:two_machine_cases}):
  \begin{itemize}
    \setlength{\topsep}{0pt}
    \setlength{\itemsep}{0pt}
    \item[(aa)] $f_2 < s_3$ -- then $I_3$ is an active interval and
    $G_3 = \{I_1, I_3\}$ is of the form (a),
    \item[(ab)] $s_2 < s_3 < f_2$ -- then $I_3$ is an active interval and
    $G_3 = \{I_2, I_3\}$ is of the form (b),
    \item[(ac)] $s_3 < s_2$ -- then $I_3$ is an inactive interval and
    $G_3 = \{I_2, I_3\}$ is of the form (c).
  \end{itemize}
  We now proceed to similar analysis of what are the forms of next greedy states
  that can be reached from states of the form (b) and (c).

  If $G_2$ is of the form (b) then $I_3$ is either compatible with $I_2$ (case
  (ba)) and $G_3$ is of the form (a), or it overlaps with $I_2$ (case (bb))
  and $G_3$ is of the form (b).
  Note that $I_3$ can not overlap with $I_1$ as then $I_3$ would be rejected.

  Similarly, if $G_2$ is of the form (c) then $I_3$ is either compatible with
  $I_2$ (case (ca)) and $G_3$ is of the form (a), or it overlaps with $I_2$
  (case (cb)) and $G_3$ is of the form (b).

  No other forms than (a), (b) or (c) are reachable from (a) and this concludes
  the proof.
\end{proof}

\FIGURE{t}{0.9}{two-machine-cases}{
  Three possible forms of a greedy state and cases as in
  \cref{lemma:two-machine-cases}.
  Active intervals are marked with bold lines and potential cases for $I_3$ are
  marked with dotted lines.
  Note that $I_1$ in forms (a) and (b) may be either active or inactive.
  \label{fig:two_machine_cases}
}

\noindent
We now describe our algorithm for DIS2.
For each part it maintains the following:
\begin{itemize}
  \setlength{\topsep}{0pt}
  \setlength{\itemsep}{0pt}
  \item $B[I_i]$ for all active intervals $I_i$ -- the result of part query
  from the greedy state $\{I_i, I_{i'}\}$ of the form (b) where $I_{i'}$ is
  direct successor (in $\prec$ order),
  \item $C[I_i]$ for all inactive intervals $I_i$ -- the result of part query
  from the greedy state $\{I_i, I_{i'}\}$ of the form (c) where $I_{i'}$ is the
  latest (in $\prec$ order) active interval fully inside $I_i$.
\end{itemize}
When a part is updated, $B[\cdot]$ and $C[\cdot]$ structures are rebuilt
from scratch. Computation of $B[I_i]$ or $C[I_i]$ is nothing else than answering
a part query for the appropriate greedy state. We ask these queries in
decreasing order of the sum of indices (in $\prec$ order) of the two intervals
of the greedy state.
This way, during the recomputation of $B[\cdot]$ and $C[\cdot]$ structures,
whenever the algorithm is going to use some other result of $B[\cdot]$ or
$C[\cdot]$ it is already computed as the queried sum of indices will be
larger. See the details below.

Additionally, for each active interval $I_i$ we precompute the earliest
(in $\prec$) interval $I_{i'}$ not on the path from $I_i$ to the root of
$\TS_j$.
We do this using dynamic programming, inspecting all the intervals
in decreasing order of $\prec$ and it takes $\Ohtilde(n^{1/2})$ time.
Similarly, we precompute the number of intervals on the path from $I_i$ to the
latest interval ending earlier than $I_{i'}$.

We now describe how to answer the part query from a greedy state $G_i$ following
the proof of \cref{lemma:two-machine-cases} and considering all
forms of $G_i$.

If $G_i$ is of the form (a) we focus on finding the greedy state
$G_{i'} = \NextG^k(G_i)$ for which $k$ is the smallest such that
$I_1 \not \in \NextG^k(G_i)$. If $I_1$ is replaced in $G_{i'}$ by an active
interval $I_{i'}$, it has to be the earliest (in $\prec$) interval overlapping
with an interval $I_{i''}$ on the path from $I_2$ to the root in $\TS_j$
(it can also be $I_2$ itself).
We know which one and what is the contribution to the internal result as we
precomputed it.
Moreover, we observe that $G_{i'} = \{I_{i'}, I_{i''}\}$ and its part result is
stored in $B[I_{i''}]$ so we just read the result from there.
If $I_1$ is replaced in $G_{i'}$ by an inactive interval $I_{i'}$ it has to be
the earliest (in $\prec$) interval compatible with $I_1$. Then
$G_{i'} = \{I_{i'}, I_{i''}\}$ where $I_{i''}$ is the latest (in $\prec$ order)
active interval fully inside $I_{i'}$. Thus, we read the part result for $G_{i'}$
from $C[I_{i'}]$.

If $G_i$ is of the form (b), then $\NextG(G_i)$ is either of the form (a) for
which we proceed as described above or of the form (b) but with both greedy
state intervals active (case (bb) of the proof of
\cref{lemma:two-machine-cases}), for which we use \cref{prop:greedy_tree_skip}
to reach the greedy state of the form (a) and later proceed as described above.

If $G_i$ is of the form (c), then $\NextG(G_i)$ is either of the form (a) or (b)
and we proceed as described above.

\subsection{An $\Ohtilde(n^{1-1/m})$-time algorithm for $m \ge 3$ machines}

Surprisingly, before we start describing the final algorithm for $m \ge 3$
machines, we need an additional building block for the two machine case.

\begin{definition}
  For a collection of intervals $S$, for $I_1 \prec I_2$ from $S$, we define
  the \emph{first machine replacement} $\M(I_1, I_2)$ to be the interval in $S$
  which replaces $I_1$ in the greedy state when resumed the greedy execution
  from the greedy state $\{I_1, I_2\}$ on two machines. In other words,
  $\M(I_1, I_2)$ is the earliest ending accepted interval after $I_2$ that will
  be scheduled on the same machine as $I_1$ by the greedy algorithm for IS+.
\end{definition}

\FIGURE{t}{0.35}{fmr}{Both dotted intervals are accepted by the machine
that accepted $I_2$ and the dashed interval overlaps with $I_1$ so is
rejected. The left dotted interval in the example is $I_{i'''}$, the
solution of the subproblem from the computation of $\M(I_1, I_2)$.}

Within the desired time bounds, for $m \ge 3$, we can afford recomputing
$\M(\cdot, \cdot)$
in parts from scratch for every pair of intervals in the updated part, as long
as this recomputation takes $\Ohtilde(|\PS_j|^2)$ time.
We could not do the same for $m = 2$.

\begin{lemma}
  The values of $\M(\cdot, \cdot)$ for all pairs of intervals in a collection of
  $n$ intervals, can be computed in $\Ohtilde(n^2)$ time.
\end{lemma}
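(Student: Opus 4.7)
The plan is, for each fixed $I_1 \in S$, to compute $\M(I_1, I_2)$ for every $I_2$ with $I_1 \prec I_2$ in total time $\Ohtilde(n)$, for an overall cost of $\Ohtilde(n^2)$. The approach exploits a linear recursion along the subsequence of intervals compatible with $I_1$, sorted by end time.

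First, I would isolate the structural core of the recursion. When the greedy algorithm for IS+ is resumed from a state $\{I_1, I_2\}$ with $f_{I_1} < f_{I_2}$, only intervals $I$ with $f_I > f_{I_2}$ are considered afterwards. Among these, any $I$ with $s_I < f_{I_1}$ is rejected (both machines remain busy), so the first surviving candidate must be compatible with $I_1$. Let $g(I_1, I_2)$ denote this first surviving interval. If $s_{g(I_1, I_2)} < f_{I_2}$, only machine~1 is available when $g(I_1, I_2)$ is considered, so it is assigned there and $\M(I_1, I_2) = g(I_1, I_2)$; otherwise both machines are available, the latest-busy one (machine~2) takes $g(I_1, I_2)$, and $\M(I_1, I_2) = \M(I_1, g(I_1, I_2))$. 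If no such surviving candidate exists, $\M(I_1, I_2) = \perp$.

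Next, I would introduce, for each $I_1$, the list $L(I_1)$ of intervals compatible with $I_1$ (those with $s \ge f_{I_1}$), sorted by end time as $J_1 \prec J_2 \prec \cdots \prec J_k$. The value $g(I_1, I_2)$ is exactly the first $J_i$ with $f_{J_i} > f_{I_2}$: for $I_2 \in L(I_1)$ this is the successor of $I_2$ in $L(I_1)$, and for $I_2 \notin L(I_1)$ (i.e., $I_2$ overlapping $I_1$) it can be found by binary search. Processing $J_k, J_{k-1}, \ldots, J_1$ in order, the recursion computes each $\M(I_1, J_i)$ from $\M(I_1, J_{i+1})$ in $\Oh(1)$ time. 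For each of the remaining $I_2$ overlapping $I_1$, a single binary search followed by one step of the recursion yields $\M(I_1, I_2)$ in $\Oh(\log n)$ time.

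Globally sorting the intervals by end time once costs $\Oh(n \log n)$. For each $I_1$, extracting $L(I_1)$ by scanning the sorted list and filtering by $s \ge f_{I_1}$ costs $\Oh(n)$; the backwards chain DP costs $\Oh(|L(I_1)|)$; and the $\Oh(n)$ overlapping queries cost $\Oh(n \log n)$, giving $\Oh(n \log n)$ per $I_1$ and $\Oh(n^2 \log n) = \Ohtilde(n^2)$ overall. The main point to verify is that the recursion handles all three forms (a), (b), (c) of a two-machine greedy state from \cref{lemma:two-machine-cases} uniformly, especially when $I_1$ and $I_2$ overlap; this holds because the recursion depends only on the finish times and on $g(I_1, I_2) \in L(I_1)$, and a single case split on $s_{g(I_1, I_2)}$ versus $f_{I_2}$ covers all three forms.
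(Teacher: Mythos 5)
Your proposal is correct, and its heart is the same recursion the paper uses: find the earliest-ending interval $g$ after $I_2$ that is compatible with $I_1$; if $g$ overlaps $I_2$ then $\M(I_1,I_2)=g$, otherwise $g$ is taken by the latest-busy machine (machine~2) and $\M(I_1,I_2)=\M(I_1,g)$. Where you differ is in the organization of the dynamic program and the machinery needed to find $g$. The paper sweeps all pairs in decreasing order of $i'+i''$ (a diagonal sweep over the DP table), which makes the recursive call $\M(I_{i'},I_{i'''})$ available but requires a static 2D range-search tree to answer ``earliest-ending interval ending after $f_{I_{i''}}$ and starting after $f_{I_{i'}}$'' for an arbitrary pair. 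You instead fix $I_1$ and observe that, restricted to the sorted compatible list $L(I_1)$, the successor $g$ is just the next element, so a single backward pass over $L(I_1)$ resolves all pairs with $I_2\in L(I_1)$ in linear time, and every other $I_2$ needs only one binary search into $L(I_1)$ plus one lookup. This trades the 2D range tree for an elementary per-$I_1$ sorted list, giving the same $\Ohtilde(n^2)$ bound via a cleaner argument. Two small notes: the paper's proof also records, alongside each $\M(I_1,I_2)$, the number of intervals accepted before reaching it (needed later for part-query bookkeeping); your backward chain propagates this count just as easily, so it is not a gap, but you should state it if this lemma is to be used as in the paper. Also, you should make explicit that the ``no surviving candidate'' case returns $\perp$, matching the paper's convention that $\LC$ and hence $\M$ can be undefined within a part.
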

\begin{proof}
  Assuming that intervals in part are ordered by $\prec$ and given names
  $I_1, I_2, \ldots$ in line with this order, we compute $\M(I_{i'}, I_{i''})$
  in decreasing order of the sum of $i' + i''$ indices.
  To compute $\M(I_{i'}, I_{i''})$, for $I_{i'} = (s_{i'}, f_{i'})$ and
  $I_{i''} = (s_{i''}, f_{i''})$ we first find the earliest ending interval
  $I_{i'''}$ that ends later than $I_{i''}$ and is compatible with $I_{i'}$.
  To solve this subproblem we take a geometric view: each interval
  $(s_i, f_i)$ is converted into a point $(s_i, f_i)$ in 2D plane, the goal is
  to find the point with smallest $y$-coordinate above and to the
  right of $(s_{i'}, f_{i''})$. This is solved by a 2D range search
  tree indexed by $(x, y)$-coordinates storing the appropriate result.
  Thus, the subproblem is solved.
  We proceed with the computation of $\M(I_{i'}, I_{i''})$.
  We have two cases: either $I_{i'''}$ is overlapping with $I_{i''}$
  and then $\M(I_{i'}, I_{i''}) = I_{i'''}$ or $I_{i'''}$ is compatible with
  $I_{i''}$ and then $\M(I_{i'}, I_{i''}) = \M(I_{i'}, I_{i'''})$ which is
  already known by the order of the computation.
  We can also compute the number of intervals chosen by the greedy algorithm
  when resumed from state $\{I_{i'}, I_{i''}\}$ until reaching
  $\M(I_{i'}, I_{i''})$ (just $1$ or the number chosen from $I_{i'}, I_{i'''}$
  plus $1$ depending on the above cases).
\end{proof}

As it turns out, the values of $\M(\cdot, \cdot)$ play important role in the
algorithm for $m \ge 3$.
We want to preprocess tuples of possible entry greedy states for
a part to be able to efficiently answer part queries. The problem is that we
have $\Oh(n^{1/m})$ intervals in each part, but we aim at $\Ohtilde(n^{1-1/m})$
time complexity. Thus, we cannot precompute part queries for
all possible greedy states. Instead, we carefully select specific
\emph{compressible} greedy states for which part query results are actually
stored and design an algorithm that can push the simulation forward to the next
compressible state or the exit state from the part.

\begin{definition}
  Let $G_t = \{I_1, I_2, \dots, I_m\}$ be a greedy state. We assume
  $I_1 \preceq I_2 \preceq \ldots \preceq I_m$.
  We say that $G_t$ is \emph{compressible} if at least one of the following
  conditions hold:
  \begin{itemize}
    \setlength{\itemsep}{0pt}
    \item[(a)] $I_m$ is inactive,
    \item[(b)] $I_m$ is active and exists active interval $I_p$ such that
    $\LC(I_p) = I_m$,
    \item[(c)] $I_m = \M(I_1, I_{m-1})$.
  \end{itemize}
  \label{def:compressible}
\end{definition}

\begin{lemma}
  In a~part of $n$ intervals for DIS+ on $m$ machines, there are only
  $\Oh(n^{m-1})$ compressible greedy states.
  \label{lemma:compressible}
\end{lemma}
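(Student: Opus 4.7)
My plan is to bound separately the number of compressible greedy states satisfying each of the three defining conditions of \cref{def:compressible} by $\Oh(n^{m-1})$; since $m$ is a constant the sum is again $\Oh(n^{m-1})$.

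For condition (c), the equation $I_m = \M(I_1, I_{m-1})$ shows that $I_m$ is a function of the pair $(I_1, I_{m-1})$ alone. Hence the number of such states is at most the number of non-decreasing prefixes $I_1 \preceq I_2 \preceq \cdots \preceq I_{m-1}$ of length $m-1$, which is $\Oh(n^{m-1})$, and $I_m$ is then determined.

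For condition (b), the relevant reading---and the only one consistent with the claimed bound---is that $I_p$ must occur among the first $m-1$ entries of the state, i.e., $I_p$ is an in-state child of $I_m$ in the greedy tree $\TS_j$. I would enumerate over the $\Oh(1)$ positions that $I_p$ occupies; once such a position is fixed, the $n$ choices of the active $I_p$ force $I_m = \LC(I_p)$, and the remaining $m-2$ entries form a non-decreasing sub-tuple in $\Oh(n^{m-2})$ ways. Multiplying out yields $\Oh(n^{m-1})$ states.

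Condition (a) is the delicate case and I expect it to be the main obstacle. If $I_m$ is inactive we have up to $n$ choices for $I_m$ and a~priori $\Oh(n^{m-1})$ choices for the remaining entries, which would only give $\Oh(n^m)$. To save the extra factor of $n$, I would invoke \cref{lemma:inactive_active_inside}: in any greedy state arising during the execution of the greedy algorithm, the latest interval $I^{\star}$ contained inside an inactive $I_m$ that is in the state must also belong to the state. Since $I^{\star}$ is a function of $I_m$, fixing $I_m$ automatically pins down one of the $m-1$ remaining entries, leaving only $m-2$ entries to choose, contributing $\Oh(n^{m-2})$ non-decreasing sub-tuples. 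Summing over the at most $n$ inactive choices for $I_m$ gives $\Oh(n^{m-1})$, completing the bound. Conditions (b) and (c) essentially reduce to enumerating $(m-1)$-tuples once the correct interpretation of the existential clause in (b) is adopted; the real work is in (a), where \cref{lemma:inactive_active_inside} is precisely the tool that shaves the missing factor of $n$.
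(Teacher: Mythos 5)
Your proof matches the paper's exactly: case (a) invokes \cref{lemma:inactive_active_inside} to recover a forced hidden entry, case (b) charges to an edge of the greedy tree $\TS_j$ plus an $(m-2)$-tuple, and case (c) recovers $I_m$ from $\M(I_1,I_{m-1})$. Your observation that case (b) must be read as requiring $I_p$ to actually be one of the state's entries is correct and is tacitly what the paper does (storing the edge $(I_p,I_m)$ together with an $(m-2)$-tuple accounts for all $m$ entries only if $I_p$ is one of them), and the type-(b) states generated in the algorithm always contain both $I_*$ and one of its children.
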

\begin{proof}
  We consider all the forms of the compressible greedy state as in
  \cref{def:compressible}.
  \begin{itemize}
    \setlength{\itemsep}{0pt}
    \item[(a)] from \cref{lemma:inactive_active_inside} we know that the
    greedy state also contains the latest interval fully inside $I_m$ and thus
    we can forget this interval, so there are $\Oh(n^{m-1})$ such states,
    \item[(b)] there is an edge $(I_p, I_m)$ in the greedy tree of $\PS_j$, we
    can store $(m-2)$-tuple of other intervals and the identifier of the
    appropriate edge, so there are $\Oh(n^{m-1})$ such states,
    \item[(c)] we forget $I_m$ as it is equal to $\M(I_1, I_{m-1})$,
    so there are $\Oh(n^{m-1})$ such states. \qedhere
  \end{itemize}
\end{proof}

Note that we can decompress the representations from \cref{lemma:compressible}
in $\Oh(m)$ time to obtain a full greedy state of size $m$. Also, by taking into
account the sizes of the parts, we obtain that there are only $\Oh(n^{1-1/m})$
compressible greedy states for $n$ intervals in $S$.

For an update in $\PS_j$, we recompute part query results for all
compressible greedy states in $\PS_j$. As in
\cref{subsection:two_machines_algo}, we do this using dynamic programming, in
decreasing order of the sum of indices of the uncompressed state.
The problem of computing the results for the states stored in the dynamic
programming table is once again translated into the general query that has
$m$-tuple as an input and has to push the simulation forward either to the next
part or at least to a compressible greedy state from which we read the already
preprocessed result and combine it with the traversed prefix of the path.
We proceed with describing how to solve this general query.

We distinguish three forms of the greedy state $G = \{I_1, I_2, \dots, I_m\}$
for $I_1 \preceq I_2 \preceq \ldots \preceq I_m$:
\begin{itemize}
  \setlength{\itemsep}{0pt}
  \item[(*)] $I_m$ is inactive,
  \item[(**)] there is $1 < p \le m$ such that all intervals
  $I_p, I_{p+1}, \dots, I_m$ are active,
  \item[(***)] all $I_1, I_2, \dots, I_m$ are active.
\end{itemize}

For the (*) case, we compute $G' = \NextG(G)$.
Either the latest accepted interval in $G'$ is inactive and then $G'$ is
compressible of type (a) or it is active, thus $G'$ is either of the (**) or
(***) form and we proceed with it as described below.

For the (***) case, we use \cref{prop:greedy_tree_skip} to find the
earliest greedy state $G' = \NextG^k(G)$ for which $I_* \in G'$.
We observe that such $G'$ is compressible of type (b), as both $I_*$ and at
least one of its children are elements of $G'$.

We now consider the (**) case. We compute $G' = \NextG(G)$ and consider the
following subcases depending on the latest accepted interval $I_+$ in $G'$:
\begin{itemize}
  \setlength{\itemsep}{0pt}
  \item[(1)] $I_+$ is inactive,
  \item[(2)] $I_+$ is active, overlapping with $I_m$ and $I_1 \in G'$,
  \item[(3)] $I_+$ is active, overlapping with $I_m$ and $I_1 \not \in G'$,
  \item[(4)] $I_+$ is active and not overlapping with $I_m$.
\end{itemize}

In case (1), we see that $G'$ is compressible of type (a).
In case (2), we observe that $I_+ = \M(I_1, I_m)$, so $G'$ is compressible of
type (c).
In case (3), we observe that $G'$ remains of type (**), but with smaller $p$.
We proceed with computing $\NextG(G')$ until we reach any other case, which
happens after at most $\Oh(m)$ iterations.
In case (4), we observe that $G' = G \setminus \{I_m\} \cup \{I_+\}$ and
$I_+$ is compatible with every other interval from $G'$.
We read $\M(I_1, I_+)$ and push the simulation forward until reaching the first
greedy state $G''$ with accepted interval $I_{++}$ that will be scheduled on
a different machine than $I_m$. Notice that if $I_{++}$ is active then it is
compatible with $I_1$ so $I_{++} = \M(I_1, I_m)$. As $m \ge 3$, $I_{++}$ will
replace $I_{m-1} \neq I_1$ in the greedy state thus $G''$ is compressible of
type (c) and if $I_{++}$ is inactive then $G''$ is compressible of type (a).

\bibliography{references}

\appendix

\section{An $\Ohtilde(n^{1/2})$-time algorithm for one machine}
\label{section:n12_solution}

Here we show a simple structure that already needs a subset of the ideas used in
more complicated and faster structures described in the paper.
We show the data structure for DIS showing the following.
\begin{theorem}
  There is a data structure for DIS that supports any sequence of $n$
  insert/delete operations in $\Oh(\sqrt{n} \log n)$ amortized time per
  update.
  \label{thm:sqrtlog}
\end{theorem}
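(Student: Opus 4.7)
The plan is to specialize the partitioning scheme sketched in the paper's preamble to \cref{theorem:n13} with $k = \Theta(\sqrt{N})$. I work in epochs: at the start of an epoch I fix $N = |S|$, and when $|S|$ leaves $[\tfrac12 N, 2N]$ I rebuild everything (amortizing $\Oh(N\log N)$ over $\Theta(N)$ operations). During an epoch I pick separators so that every part $\PS_j$ (intervals assigned by $s_i$) holds at most $2\sqrt{N}$ intervals and no two consecutive parts are both below $\tfrac12\sqrt{N}$; this guarantees $\Oh(\sqrt{n})$ parts. A global balanced BST $\DS$ sorted by $f_i$ supports $\LC$ queries and point updates in $\Oh(\log n)$ time.

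\textbf{Per-part structure and query.} For each part $\PS_j$, after any update inside $\PS_j$, I rebuild from scratch the greedy tree $\TS_j$ on the intervals of $\PS_j$: each $I_i \in \PS_j$ is linked to $\LC(I_i)$ when this lies in $\PS_j$, otherwise to an artificial root. This construction issues $\Oh(\sqrt{N})$ $\LC$-lookups in $\DS$ at total cost $\Oh(\sqrt{N}\log N)$. A single post-order traversal tags each node $I_i$ with the depth $d(I_i)$ in $\TS_j$ and the exit interval $\operatorname{Ex}(I_i)$ (the last proper ancestor still inside $\PS_j$), so that a part query from any $I_i \in \PS_j$ is answered in $\Oh(1)$ time. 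A DIS query then proceeds in the natural way: start at the globally earliest-ending interval; at each step, for the current interval $I \in \PS_j$, look up $d(I)$ and $\operatorname{Ex}(I)$, add the depth to the running counter, and jump to $\LC(\operatorname{Ex}(I))$ using $\DS$. Since every part is visited at most once and each step costs $\Oh(\log n)$, the query runs in $\Oh(\sqrt{n}\log n)$ time.

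\textbf{Updates, amortization, and main obstacle.} An insertion or deletion first takes $\Oh(\log n)$ time in $\DS$, then triggers a full rebuild of the affected part in $\Oh(\sqrt{N}\log N)$ time. If a part grows above $2\sqrt{N}$ or a consecutive pair both drops below $\tfrac12\sqrt{N}$, I split or merge (rebuilding $\Oh(1)$ parts), charging to the $\Omega(\sqrt{N})$ intervening updates needed to cross the imbalance thresholds. Combined with the amortized epoch reconstruction, the amortized cost per operation stays within $\Oh(\sqrt{n}\log n)$. The main subtlety I anticipate is the consistency between the per-part greedy trees and the global $\LC$ structure in the presence of intervals with $s_i \in \PS_j$ but $f_i$ past $\PS_j$'s right separator (the paper's \emph{external} intervals): such an interval appears in $\TS_j$ as a node whose greedy parent is the artificial root, so $\operatorname{Ex}(I)$ is precisely the last interval picked inside $\PS_j$ and $\LC(\operatorname{Ex}(I))$ via $\DS$ correctly lands in a later part. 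Verifying this consistency—and that a localized rebuild of $\PS_j$ suffices after any update, since every parent pointer in $\TS_{j'}$ for $j' \neq j$ depends only on intervals in $\PS_{j'}$—is the key correctness point that makes the scheme work.
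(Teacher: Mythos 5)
The high-level plan (partition by start time into $\Theta(\sqrt{N})$ parts, precompute per-part greedy results, jump across parts at query time, amortize rebuilds) matches the paper, but the specific way you define the per-part tree introduces a correctness gap exactly at the place you flagged as "the key correctness point." Your claim that "every parent pointer in $\TS_{j'}$ for $j' \neq j$ depends only on intervals in $\PS_{j'}$" is false under your definition of the tree, so a localized rebuild does \emph{not} suffice.

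Concretely, you link $I_i \in \PS_{j'}$ to $\LC(I_i)$ whenever $\LC(I_i) \in \PS_{j'}$, which allows an \emph{external} interval of $\PS_{j'}$ to be the parent. That pointer is not stable under updates to other parts. Take $x_{j'}=0$, $x_{j'+1}=10$, $I_i=(1,5)$ internal and $E=(7,20)$ external to $\PS_{j'}$, with no other compatible interval ending before $20$. Then $\LC(I_i)=E \in \PS_{j'}$, so your $\TS_{j'}$ has $I_i \to E \to \text{root}$, $d(I_i)=2$, $\operatorname{Ex}(I_i)=E$. Now insert $I_{\mathrm{new}}=(12,15)$ into a later part $\PS_j$. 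Globally $\LC(I_i)$ becomes $I_{\mathrm{new}}$, so $I_i$'s parent should switch to the artificial root with $d(I_i)=1$ and $\operatorname{Ex}(I_i)=I_i$; but you only rebuild $\TS_j$, so the stale $\TS_{j'}$ still reports depth $2$ and exit $E$. A query passing through $I_i$ then (i) wrongly counts $E$ and (ii) jumps to $\LC(E)$ (an interval starting after $20$) instead of to $I_{\mathrm{new}}$, skipping intervals such as $(16,17)$ that the true greedy run would take. So the reported optimum can be strictly wrong, not just off by a bookkeeping constant.

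The paper avoids this precisely by using $\LCinternal$: $I_i$'s stored successor is $\LC(I_i)$ only when $\LC(I_i)$ is \emph{internal} to $\PS_{j'}$, and is $\perp$ otherwise. That pointer is stable across updates in other parts, by the following geometric fact: if a compatible interval for $I_i$ has $f < x_{j'+1}$, then its start lies in $(f_i, x_{j'+1}) \subseteq (x_{j'}, x_{j'+1})$, so it is in $\PS_{j'}$; hence any insertion or deletion that changes whether $\LCinternal(I_i)=\perp$ must occur inside $\PS_{j'}$ itself. The price is one extra case at query time: after exhausting the internal chain ending at $\Einternal(I_i)$, the next greedy interval $\LC(\Einternal(I_i))$ (looked up in the fresh global $\DS$) may still be an external interval of $\PS_{j'}$, in which case you take one more global $\LC$ step before leaving the part. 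To repair your proof, replace the linking rule by ``$I_i$'s parent is $\LC(I_i)$ if $\LC(I_i)$ is internal to $\PS_j$, else the artificial root,'' keep external intervals as depth-$1$ children of the root, and add the extra hop at query time; the rest of your amortization and running-time analysis then goes through unchanged.
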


We choose the set of \emph{separators} $x_1 < x_2 < \dots < x_k$. Separators
split the universe of coordinates into $k+1$ \emph{parts}:
$\PS_0, \PS_1, \dots, \PS_k$.
Assuming that $x_{k+1} = +\infty$, each part $\PS_j$ contains intervals
$(s_i, f_i)$ with $x_j \le s_i < x_{j+1}$.
Thus, at any time parts represent partition of intervals.
We define the \emph{size} of a part as the number of intervals in it.
Intervals store references to parts in which are contained.

Separators (and pointers to the appropriate parts) are stored in
a predecessor/successor data structure (we use balanced binary search trees
\cite{cormen}
\footnote{If for all intervals $s_i, f_i$ are small integers bounded by $U$, we
could use $y$-fast tries \cite{willard}. This way we could achieve
$\Oh(\sqrt{n} \log \log U)$ amortized time per operation.})
and are chosen to satisfy the following invariant:
each part has size at most $2\sqrt{N}$ and for every two
consecutive parts $\PS_j$ and $\PS_{j+1}$ at least one has size at least
$\frac{1}{2}\sqrt{N}$. Thus, there are $\Oh(\sqrt{n})$ parts at any time
and local rebuild of parts of size $\Oh(\sqrt{n})$ happens after
$\Omega(\sqrt{n})$ operations affecting the part. As these rebuilds are
simply appropriate separate insertions, the amortized update time complexity
does not change.

Intervals in part $\PS_j$ satisfying $f_i < x_{j+1}$ are called \emph{internal}
and all the others are called \emph{external}.
Internal intervals are stored in predecessor/successor data structures: sorted
by $s_i$ and, separately, sorted by $f_i$.
Additionally, we have the same structures defined globally, for all the
intervals in $S$. This allows to compute $\LC(I_i)$ in $\Oh(\log n)$ time.

For each internal interval $I_i$ in part $\PS_j$ we store its leftmost
compatible internal interval in the same part, denoted by $\LCinternal(I_i)$
(either $\LC(I_i)$ or $\perp$ in case $\LC(I_i) \not \in \PS_j$ or is external).
Additionally, we store the information to resume the greedy execution from an
internal interval $I_i$ to the latest interval in the same part. This includes:
$\Rinternal(I_i)$ -- the largest $r \ge 0$ such that
$\LCinternal^r(I_i) \neq \perp$ and
$\Einternal(I_i) = \LC^{\Rinternal(I_i)-1}(I_i)$.

When the content of $\PS_j$ is updated, all the above values for intervals of
$\PS_j$ are recomputed naively from scratch: we start with computing
$\LCinternal(\cdot)$ in decreasing order of $f_i$.
We set $\LCinternal((s_i, f_i))$ to be the interval $I'$ with the smallest
$f_{i'}$ among intervals with $s_{i'} \ge f_i$ or $\perp$ if there is no such
interval.
We update which interval is $I'$ whenever the computation of
$\LCinternal(\cdot)$ proceeds to smaller values of $f_i$ by querying the
appropriate part structure (containing only internal intervals) sorted by $s_i$.
Overall, this naive recomputation of all information for all internal intervals
in the part takes $\Oh(\sqrt{n} \log n)$ time.

With the above, we can resume the greedy algorithm from any $I_i$ in
any $\PS_j$ until reaching the earliest interval in the solution outside $\PS_j$
in $\Oh(\log n)$ time.
For an external interval $I_i$ it is enough to proceed to $\LC(I_i)$ to exit
$\PS_j$.
If $I_i$ is internal, we increase the total result by the number of selected
intervals in the part (the internal result for $I_i$) and proceed to
$I_{i'} = \LC(\Einternal(I_i))$.
$I_{i'}$ may already be in some further part or it may be an external interval
in $\PS_j$ and then we proceed to $\LC(I_{i'}) \not \in \PS_j$.

To answer a DIS query, we simulate the execution of the greedy
algorithm starting from the earliest ending interval and traversing the parts as
described above. The query as described takes $\Oh(\sqrt{n} \log n)$ time.

To insert an interval $(s_i, f_i)$ to $S$, we first locate the appropriate part
$\PS_j$ in the separators structure, insert the interval into $\PS_j$, recompute
the additional information associated with $\PS_j$ and update the global
structures.
All these takes $\Oh(\sqrt{n} \log n)$ time.
During insertion, it may happen that $\PS_j$ becomes too large. In this case,
if the size reached $s$, we naively find (using an appropriate
predecessor/successor structure) $\lceil \frac{s}{2} \rceil$-th value $x$ in
the set of $s_i$s of all intervals in $\PS_j$ and add $x$ as the new separator.
This splits $\PS_j$ into two new parts, which we recompute from scratch.
This, again, works in $\Oh(\sqrt{n} \log n)$ time.

Deletion of an interval of $\PS_j$ is similar and in the case of underflow
of pair $\PS_j$ and $\PS_{j+1}$ or $\PS_{j-1}$ and $\PS_j$, we merge the parts
by removing the separator between them and recompute the new part.

\section{Lower bound for Dynamic Weighted Interval Scheduling}

The \textsc{Minimum Weight $k$-Clique} problem is to find, in an edge-weighted
graph, a clique of exactly $k$ nodes having the minimum total weight of edges.

The following hypothesis about \textsc{Minimum Weight $k$-Clique} problem was
formulated.
\begin{conjecture}[Min Weight $(2\ell+1)$-Clique Hypothesis \cite{lincolnwilliams2}]
  There is a constant $c > 1$ such that, on a Word-RAM with $\Oh(\log n)$-bit
  words, finding a $k$-Clique of minimum total edge weight in an $n$-node graph
  with non-negative integer edge weights in $[1, n^{ck}]$ requires $n^{k-o(1)}$
  time.
\end{conjecture}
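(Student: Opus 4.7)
The stated assertion is the Min Weight $(2\ell+1)$-Clique Hypothesis as formulated by Lincoln and Vassilevska Williams, imported into this paper as an external assumption. It is not a theorem to be derived: no unconditional proof is known and, indeed, no super-linear time lower bound on an explicit polynomial-time problem is known in the word-RAM model. Accordingly, a proof proposal can only sketch the evidence-gathering program that one would run to \emph{justify} the hypothesis, rather than derive it from scratch.

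The plan I would pursue has two halves. First, I would pin down the matching upper bound: a straightforward enumeration over all $k$-tuples of vertices with edge-weight summation solves Min Weight $k$-Clique in $\Ohtilde(n^k)$ time, so the conjectured $n^{k-o(1)}$ exponent is in fact tight under the hypothesis. Second, I would gather conditional evidence through fine-grained reductions in both directions. Downward, I would reduce from Min Weight $k$-Clique to standard hard problems (Negative Triangle for $k=3$, APSP via Negative Triangle, and their higher-arity generalizations), so that a speedup for Min Weight $k$-Clique propagates to long-standing open problems. Upward, I would attempt the converse reductions: showing that any $\Oh(n^{k-\epsilon})$ algorithm for Min Weight $k$-Clique with weights in $[1, n^{ck}]$ would already refute widely believed hypotheses such as APSP or SETH. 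The weight-range parameter $c$ is chosen large enough that $k$ edge weights can be packed into a single word without digit collisions, which is exactly what the lower-bound reductions exploit on a word-RAM with $\Oh(\log n)$-bit words.

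The main obstacle, and the reason the program stops short of an unconditional proof, is the second half. The upward transfer reductions place Min Weight $k$-Clique at the top of a hierarchy of ``super-linear'' problems only \emph{conditionally}, so in the best case the hypothesis is reduced to other equally unproved assumptions rather than resolved. An unconditional proof would require genuinely new circuit or data-structure lower bound techniques for explicit polynomial-time problems, which is well beyond the reach of current complexity theory. The realistic conclusion of the proposal is therefore to adopt the hypothesis exactly as stated and, as the paper then does, use it as the base of a conditional lower bound for DWIS.
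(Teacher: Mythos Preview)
Your assessment is correct: the statement is a \emph{conjecture} imported verbatim from \cite{lincolnwilliams2}, and the paper offers no proof of it whatsoever --- it simply states the hypothesis and then uses it (via \cref{thm:minweightkcyclehard}) as the basis for the conditional lower bound on DWIS. Your recognition that no proof is possible or expected here, and that the appropriate response is to adopt the hypothesis as stated, matches exactly what the paper does.
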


The \textsc{Minimum Weight $(2\ell + 1)$-Cycle} problem is to find, in an
edge-weighted graph, a cycle consisting exactly $2\ell + 1$ edges having the
minimum total weight.

\begin{theorem}[\cite{lincolnwilliams2}]
  If there is an integer $\ell \ge 1$ and a constant $\epsilon > 0$ such that
  \textsc{Minimum Weight $(2\ell + 1)$-Cycle} in a directed weighted $n$-node
  $m = \Theta(n^{1+1/\ell})$-edge graph can be solved in
  $O(mn^{1-\epsilon} + n^2)$ time, then the Min Weight $(2\ell+1)$-Clique
  Hypothesis is false.
  \label{thm:minweightkcyclehard}
\end{theorem}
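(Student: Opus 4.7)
The plan is to reduce \textsc{Minimum Weight $(2\ell+1)$-Clique} on an $n$-vertex input $H$ to an instance of \textsc{Minimum Weight $(2\ell+1)$-Cycle} whose parameters match the hypothesis. Set $k = 2\ell + 1$. By a standard color-coding argument (or derandomized partitioning) one may assume $V(H)$ is split into $k$ color classes $V_0, V_1, \ldots, V_{k-1}$ of size $n/k$ each, and we seek a \emph{rainbow} clique using one vertex per class; dropping this assumption costs only a constant factor of $k^k$. I will then build a directed layered graph $G$ whose directed $k$-cycles are in weight-preserving bijection with rainbow $k$-cliques of $H$.

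The vertices of layer $L_i$ of $G$ are $\ell$-tuples $(w_i, w_{i+1}, \ldots, w_{i+\ell-1})$ with $w_{i+j} \in V_{i+j}$ (indices modulo $k$), so $|L_i| = (n/k)^{\ell}$ and $N := |V(G)| = \Theta(n^\ell)$. There is a directed edge from $(w_i, \ldots, w_{i+\ell-1}) \in L_i$ to $(w'_{i+1}, \ldots, w'_{i+\ell}) \in L_{i+1}$ precisely when the overlap is consistent, i.e.\ $w_{i+j} = w'_{i+j}$ for $j = 1, \ldots, \ell-1$; its weight is $\sum_{j=0}^{\ell-1} w_H(w_{i+\ell}, w_{i+j})$, the sum of $H$-weights of the $\ell$ clique-edges joining the ``incoming'' vertex $w_{i+\ell}$ to the already-chosen $\ell$ vertices. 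Each super-vertex has exactly $n/k$ outgoing edges, so $M := |E(G)| = \Theta(n^{\ell+1}) = \Theta(N^{1+1/\ell})$, matching the hypothesized edge density.

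Correctness rests on two observations. First, edges go only $L_i \to L_{i+1}$ cyclically, so any directed $k$-cycle visits each layer exactly once, and the overlap constraints force the $k$ super-vertices to agree on a single vertex $w_c \in V_c$ for every color class. Second, each unordered pair $\{V_a, V_b\}$ of colors is charged by exactly one transition: transition $i \to i+1$ contributes the pairs $\{V_{i+\ell}, V_{i+j}\}$ for $j = 0, \ldots, \ell-1$, and a short case analysis on the cyclic distance $b - a \bmod k \in \{1, \ldots, 2\ell\}$ shows each pair is produced by a unique index $i$. Consequently the cycle weight equals the total weight of the $\binom{k}{2}$ clique-edges spanned by $\{w_0, \ldots, w_{k-1}\}$, and the minimum-weight $k$-cycle in $G$ reveals the minimum-weight rainbow $k$-clique in $H$.

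Plugging $N = \Theta(n^\ell)$ and $M = \Theta(n^{\ell+1})$ into the assumed $\mathcal{O}(MN^{1-\epsilon} + N^2)$-time cycle algorithm yields running time $\mathcal{O}(n^{2\ell+1-\ell\epsilon} + n^{2\ell})$, which is $n^{k - \Omega(1)}$ and therefore refutes the Min Weight $(2\ell+1)$-Clique Hypothesis; note the new edge weights are sums of $\ell$ original weights in $[1,n^{ck}]$, so they still fit in $\mathcal{O}(\log n)$-bit words. The main obstacle I anticipate is the combinatorial bookkeeping in the second correctness observation — verifying that cyclically shifting an ``$\ell$-window'' around a $(2\ell+1)$-cycle enumerates every pair of color classes exactly once. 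This is elementary but is the step where the specific arithmetic $k = 2\ell+1$ is essential (shifting by $\ell$ generates $\mathbb{Z}/k\mathbb{Z}$, and $\ell + \ell + 1 = k$), and where a naive variant with $k$ even or window length $\neq \ell$ would fail.
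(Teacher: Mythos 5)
This theorem is imported verbatim from Lincoln, Vassilevska Williams, and Williams (the paper cites it and gives no proof of its own), so there is no in-paper argument to compare against; your task was effectively to reconstruct the source reduction, and you have done so correctly. The layered graph of $\ell$-windows over a $k$-partite version of $H$, with $N=\Theta(n^{\ell})$ super-vertices, out-degree $\Theta(n/k)$, $M=\Theta(N^{1+1/\ell})$ edges, and edge weight charging the incoming vertex $w_{i+\ell}$ against the $\ell$ vertices currently in the window, is exactly the construction from that work. Your verification of the pair-covering bookkeeping is sound: with $k=2\ell+1$ odd, every unordered index pair $\{a,b\}$ has exactly one of $(a-b)\bmod k$, $(b-a)\bmod k$ in $\{1,\dots,\ell\}$, which pins down a unique transition $i$ and offset $j$, and $k\cdot\ell=\binom{2\ell+1}{2}$ confirms the count; closed walks of length $k$ in a $k$-circle-layered digraph are automatically simple and visit each layer once, so Min Weight $k$-Cycles in $G$ biject with rainbow $k$-cliques in $H$. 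One small simplification worth noting: you do not need color coding (and the attendant $k^{k}$ or derandomization overhead) to pass to the rainbow setting. The standard deterministic lift suffices: take $k$ disjoint copies $V_0,\dots,V_{k-1}$ of $V(H)$ and join $u\in V_i$ to $v\in V_j$ (for $i\neq j$) iff $uv\in E(H)$; since $H$ has no self-loops, any rainbow $k$-clique in this lift uses $k$ distinct vertices of $H$ and the correspondence is exact. This keeps the whole reduction deterministic with only a constant-factor blowup, which is cleaner and matches the source more closely. Everything else, including the final arithmetic $O(MN^{1-\epsilon}+N^{2})=O(n^{k-\ell\epsilon}+n^{k-1})=n^{k-\Omega(1)}$ and the remark on weight magnitudes, is correct.
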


Based on the above, we formulate the following.
\begin{theorem}
  Unless the Min Weight $(2\ell+1)$-Clique Hypothesis is false, for all
  $\epsilon > 0$ there is no algorithm for DWIS problem with $O(n^{1-\epsilon})$
  update and query time.
\end{theorem}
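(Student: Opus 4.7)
The plan is to reduce \textsc{Min Weight $(2\ell+1)$-Cycle} on an $n$-node graph with $m = \Theta(n^{1+1/\ell})$ weighted edges to a sequence of DWIS operations, and to invoke Theorem~\ref{thm:minweightkcyclehard}. A DWIS algorithm with $\Oh(N^{1-\epsilon})$ time per operation, run on an instance with $N = \Theta(m)$ intervals over $\Oh(n)$ operations, would cost $\Oh(n \cdot m^{1-\epsilon}) = \Oh(m n^{1-\epsilon'})$ for $\epsilon' = \epsilon \cdot (1 + 1/\ell) > 0$, which strictly beats the $\Oh(m n^{1-\epsilon'} + n^2)$ regime forbidden by Theorem~\ref{thm:minweightkcyclehard}.

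The construction would lay out $2\ell+1$ consecutive ``blocks'' on the real line, one per step of a walk of length $2\ell+1$. For every edge $e = (u,v) \in E$ and every block $i \in \{1,\ldots,2\ell+1\}$, I would place a gadget of $\Oh(1)$ intervals whose total weight is $W - w(e)$, where $W$ is chosen much larger than any single edge weight (bounded by $n^{c(2\ell+1)}$) so that the maximum weight independent set is forced to pick exactly one edge-gadget per block. Within one block the gadgets would pairwise overlap, so at most one can be selected; across consecutive blocks the edge-gadgets for $(u,v)$ at step $i$ and $(u',v')$ at step $i+1$ are positioned so that they are compatible if and only if $u' = v$. Since a single interval endpoint only encodes a one-sided inequality on vertex labels, achieving the exact ``iff'' alignment condition requires a constant number of ``alignment'' blocker intervals at each of the $2\ell$ boundaries, with weights tuned to penalize any mismatch. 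To close a walk into a cycle through a prescribed vertex $v_0$, I iterate over $v_0 \in V$ and, before each DWIS query, insert two temporary anchor intervals into the first and last block whose only compatible edge-gadgets are those incident to $v_0$; after reading the query, I delete them. This gives $\Oh(n)$ DWIS operations in total, and the minimum-weight $(2\ell+1)$-cycle is read off as $\min_{v_0} \bigl[(2\ell+1)\,W - \mathrm{DWIS}(v_0)\bigr]$.

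The main technical obstacle is establishing the alignment property exactly: one must argue that no maximum weight independent set can sacrifice one interval in a block to evade the alignment constraint and still beat an honest walk in total weight. This dictates the precise structure and weights of the blocker intervals, and with $W$ chosen to dominate all edge weights (but still $\poly(n)$, so that all arithmetic fits in $\Oh(\log n)$-bit words as required by the Word-RAM hypothesis), any optimal solution must select exactly $2\ell+1$ edge-gadgets that align across all $2\ell$ boundaries, yielding a walk $v_0 \to v_1 \to \dots \to v_{2\ell+1} = v_0$. Once this is verified, the weight calibration shows that $\min_{v_0}\mathrm{DWIS}(v_0)$ encodes the minimum $(2\ell+1)$-cycle weight, and the time analysis above contradicts Theorem~\ref{thm:minweightkcyclehard}, completing the proof.
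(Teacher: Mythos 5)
Your high-level plan — reduce from \textsc{Minimum Weight $(2\ell+1)$-Cycle}, lay out $2\ell+1$ blocks with one edge-gadget per edge per block, iterate over the cycle's anchor vertex, and apply \cref{thm:minweightkcyclehard} — matches the paper's, and your time accounting is essentially correct (though you omit the $\Oh(m)$ setup insertions; it still comes out to $\Oh(m\cdot m^{1-\epsilon})$, which is what the paper uses). However, the core technical step is missing, and you say so yourself: you identify that a single interval endpoint encodes only a one-sided inequality between vertex labels, call the exact alignment condition ``the main technical obstacle,'' propose unspecified ``alignment blocker intervals with weights tuned to penalize any mismatch,'' and then assert without argument that with $W$ large the optimal solution must align. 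That assertion is precisely what a proof of this theorem has to establish, and no blocker-interval construction is exhibited or verified. The paper resolves this cleanly and \emph{without} blockers by a different trick: the weight of each interval is $(\text{length})\cdot(2\ell+1)(W+1) + (W - w)$, so the primary objective becomes maximizing total interval length. Since compatible intervals are disjoint, the total length is at most the span $(2\ell+1)n+1$, with equality iff the chosen intervals tile the span with no gaps or overlaps. Tiling forces the right endpoint of the block-$p$ interval ($pn+v$) to equal the left endpoint of the block-$(p{+}1)$ interval ($pn+u'$), i.e. $v = u'$ exactly — the one-sided inequality becomes an equality for free, and the small $(W-w)$ terms then select the minimum-weight cycle among the tilings.

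There is a second, related gap: you reduce from $(2\ell+1)$-cycle in a general graph, so selecting one compatible edge-gadget per block would yield a closed \emph{walk} of length $2\ell+1$, not necessarily a simple cycle, and the minimum-weight such walk need not correspond to a cycle. The paper avoids this by invoking (as in \cite{lincolnwilliams2}) that the problem remains hard on $(2\ell+1)$-circle layered graphs; there, each edge goes from layer $p$ to layer $p+1 \bmod (2\ell+1)$, a closed walk of length $2\ell+1$ necessarily visits one vertex per layer and hence is a simple cycle, and each edge only needs to appear in one block rather than all $2\ell+1$. You would need to either add this restriction or separately rule out degenerate walks. As it stands, both of these are genuine holes in the proposal rather than routine details.
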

\begin{proof}
  As in \cite{lincolnwilliams2}, we use the fact that
  \textsc{Minimum Weight $(2\ell + 1)$-Cycle} is still hard if restricted only
  to $k$-circle layered graphs, that is $k$-partite graphs in which, for each
  $i \in [k]$, all edges from nodes in $i$-th part end in $(i \bmod k + 1)$-th
  part.

  We reduce \textsc{Minimum Weight $(2\ell + 1)$-Cycle} in a weighted
  $(2\ell + 1)$-circle layered graph to DWIS.
  The input instance has $n$ nodes and $m = \Theta(n^{1+1/\ell})$
  edges of integer weights in range $[n^{c\ell} = W]$ for large enough $c$.
  We enumerate parts from $1$ to $2\ell + 1$ and we enumerate nodes
  independently in each parts starting from $0$.
  For all $p \in [2\ell]$, for all edges from $u$-th node in $p$-th part
  to $v$-th node in $(p+1)$-th part,
  we insert an interval $[(p-1)n + u,\ pn + v)$ of weight
  $(f_i - s_i)(2\ell + 1)(W + 1) + (W - w)$ where $w$ is the edge weight.

  The optimal cycle has to go through some node $s$ in the first part.
  We guess this node by inserting an interval $[-1, s)$ of weight
  $(f_i - s_i)(2\ell + 1)(W + 1)$ and, for all edges from $u$-th
  node in $(2\ell + 1)$-th part to $s$ of weight $w$, we insert
  an interval $[2\ell \cdot n + u,\ (2\ell+1) \cdot n)$ of weight
  $(f_i - s_i)(2\ell + 1)(W + 1) + (W - w)$.
  To start with another choice of $s$, we delete the corresponding
  intervals before inserting the new ones.

  The selection of edge weights in our instance guarantees that the
  optimal solution maximizes the total length of chosen intervals and then
  minimizes the weight resulting from weights of edges in the graph, as each
  unit of length increases the value of the solution by $(2\ell + 1)(W + 1)$
  while the additional gain from edge weights is, in total, at most
  $(2\ell + 1)W$.

  The only possibility to obtain the value of at least
  $(2\ell + 1)(W + 1)n$ is to choose the intervals spanning the whole interval
  $[-1, (2\ell + 1)n)$ in the created instance. Such selection ensures that
  an interval representing node $s$ in the first part is selected, as well as
  all intervals representing the edges on the cycle, including the last edge
  going to the first part represented by the interval with
  $f_i = (2\ell + 1)(W + 1)$. Because in this scenario there is no gap nor
  overlap in coordinates of the selected intervals, any two consecutive edges
  share a common node, so they form a $(2\ell + 1)$-cycle.
  Thus, there is 1-1 correspondence between $(2\ell + 1)$-cycles going through
  node $s$ in the first part and solutions of weight at least
  $(2\ell + 1)(W + 1)n$.
  Nodes of the optimal $(2\ell + 1)$-cycle can be deduced by
  inspecting endpoints of the selected intervals.

  To solve \textsc{Minimum Weight $(2\ell + 1)$-Cycle} by the above reduction we
  invoked $\Oh(m)$ insertions and deletions to DWIS structure. By choosing the
  input instance to have $\ell = \frac{1}{\epsilon}$ and $n = c\ell$ for large
  enough $c$, and assuming (ad absurdum) that these
  $\Oh(m) = \Oh(n^{1+\epsilon})$ operations took $\Oh(m \cdot m^{1-\epsilon})$
  time, we obtained $\Oh(n^{2+\epsilon-\epsilon^2})$-time algorithm for the
  \textsc{Minimum Weight $(2\ell + 1)$-Cycle problem}, thus violating
  \cref{thm:minweightkcyclehard}.
\end{proof}

\end{document}